  \newtheorem{theorem}{Theorem}[section]
  \newtheorem{corollary}[theorem]{Corollary}
  \newtheorem{proposition}[theorem]{Proposition}
  \newtheorem{lemma}[theorem]{Lemma}
  \theoremstyle{definition}
  \newtheorem{definition}[theorem]{Definition}
  \newtheorem{example}[theorem]{Example}
  \numberwithin{equation}{section}
  \title[Ex-post Core, Fine Core and REE Allocations]{Ex-post Core, 
  Fine Core and Rational Expectations Equilibrium Allocations}
  \author[A. Bhowmik]{Anuj Bhowmik}
  \address{Department of Economics, Shiv Nadar University, NH91, Tehsil Dadri, Gautam Buddha 
  Nagar, Uttar Pradesh 201314, India}
  \email{anujbhowmik09@gmail.com}
  \author[J. Cao]{Jiling Cao}
  \address{Department of Mathematical Sciences, School of Engineering, Computer and Mathematical 
  Sciences, Auckland University of Technology, Private Bag 92006, Auckland 1142, New Zealand}
  \email{jiling.cao@aut.ac.nz}
  \thanks{\hspace{-1.66em} \emph{JEL classification:} D41; D43; D51; D82.}
  \thanks{\noindent \emph{Keywords.} Asymmetric information; Aumann's Core Equivalence Theorem; 
  Ex-post core; Fine core; Rational expectations equilibrium; Pure exchange mixed economy.}
  \thanks{\noindent The second-named author thanks the support of the National Natural Science 
  Foundation of China, grant No. 11571158, and the paper was partially written when he visited 
  Minnan Normal University in April 2016 as Min Jiang Scholar Guest Professor.}
  \date{}
\begin{document}

  \maketitle

  \begin{abstract}
  This paper investigates the ex-post core and its relationships to the fine core and the set of
  rational expectations equilibrium allocations in an oligopolistic economy with asymmetric 
  information, in which the set of agents consists of some large agents and a continuum of small 
  agents and the space of states of nature is a general probability space. We show that under 
  appropriate assumptions, the ex-post core is not empty and contains the set of rational 
  expectations equilibrium allocations. We provide an example of a pure exchange continuum economy 
  with asymmetric information and infinitely many states of nature, in which the ex-post core does 
  not coincide with the set of rational expectations equilibrium allocations. We also show that 
  when our economic model contains either no large agents or at least two large agents with the 
  same characteristics, the fine core is contained in the ex-post core. 
  \end{abstract}

  \section{Introduction} \label{sec:intro}
  
  In general equilibrium theory, the core and competitive equilibrium are two important solution 
  concepts. For an exchange economy with complete information, the core and its relationship to 
  the set of competitive allocations have been studied intensively in the literature (for a 
  comprehensive survey, refer to \cite{Anderson:92}). In the past few decades, several alternative 
  cooperative and non-cooperative equilibrium concepts have been proposed, in the context of 
  asymmetric information economies. The core of an economy with asymmetric information was first 
  considered by Wilson \cite{Wilson:78}, where the concepts of coarse and fine core were proposed. 
  The fine core presumes that agents can share their information when they form a coalition and an 
  allocation is not in the fine core, if a coalition has some distribution of the total endowments 
  of its members which gives to all of its members a better pay-off in an event which the coalition 
  can jointly discern. In \cite{Yannelis:91}, Yannelis introduced the concept of private core, 
  which is an analogue concept to the core for an economy with complete (and symmetric) information, 
  and proved that under appropriate assumptions, the private core is always non-empty. 
  In the definition of the private core, when a coalition blocks an allocation, each member in the 
  coalition uses only his own private information. Furthermore, Einy et al. 
  \cite{Einy-Moreno-Shitovitz:00b, Einy-Moreno-Shitovitz:00} studied the notion of ex-post core, 
  in the sense that an ex-post core allocation cannot be ex-post blocked by any coalition. On the 
  other hand, Radner \cite{Radner:79} introduced the notion of a (Bayesian) rational expectations 
  equilibrium by imposing the Bayesian (subjective expected utility) decision doctrine, in order 
  to capture the information revealed by the market clearing price. The fact that a Bayesian 
  rational expectations equilibrium does not exist universally motivates de Castro et al. 
  \cite{de Castro-Pesce-Yannelis:11} to introduce the concept of a maxinmin rational expectations 
  equilibrium, by replacing the Bayesian decision-making approach of Radner with the maximin 
  expected utility. A good survey article for the equilibrium concepts in asymmetric (or 
  differential) information economies is \cite{Glycopantis:2005}.
  
  \medskip
  For economies with complete information, Aumann \cite{Aumann:64} proved that competitive and 
  core allocations coincide, provided that there is a continuum of traders. The existence of such 
  allocations was studied by Aumann \cite{Aumann:66} and Hildenbrand \cite{Hildenbrand:70}. 
  Extensions of these results to economies with asymmetric information were made by Einy et 
  al. \cite{Einy-Moreno-Shitovitz:00b, Einy-Moreno-Shitovitz:01}. In \cite{Einy-Moreno-Shitovitz:00b}, 
  Einy et al. first established some representation results on the ex-post core and the set of 
  rational expectations equilibrium allocations. Then, these representations results together with 
  Aumann's Core Equivalence Theorem enabled them to show that if the economy is atomless and the 
  utility function of each agent is measurable with respect to his information, then the set of 
  rational expectations equilibrium allocations coincides with the ex-post core. In 
  \cite{Einy-Moreno-Shitovitz:01}, Einy et al. showed that, if an economy is irreducible, then a 
  competitive (or Walrasian expectations) equilibrium exists and, moreover, the set of competitive 
  equilibrium allocations coincides with the private core. However, to obtain these results, they 
  allow for free disposal on the feasibility (market clearing) constraints. This was motivated 
  by an example \cite{Einy-Moreno-Shitovitz:01} of an economy with asymmetric information which 
  has a competitive equilibrium with free disposal, but if the feasibility constraints are 
  imposed with an equality, then the economy does not have a competitive equilibrium where prices 
  of all contingent contracts for future delivery are non-negative. In a few years later, Angeloni 
  and Martins-da-Rocha \cite{angeloni:09} proved that the results in \cite{Einy-Moreno-Shitovitz:01} 
  are still valid without free-disposal. 
  
  \medskip
  In the past few years, techniques have been developed by Bhowmik et al. in \cite{BCY:14} to 
  investigate the existence of rational expectations equilibrium in a general model of pure 
  exchange economies. Moreover, Bhowmik and Cao \cite{Bhowmik-Cao:16} established a representation 
  result for rational expectations equilibrium allocations in terms of the state-wise Walrasian 
  allocations. As a rational expectation equilibrium allocation is an interim solution concept 
  and it takes into account the information of all other agents through market price, Bhowmik 
  and Cao \cite{Bhowmik-Cao:16} showed their result by assuming that each agent knows his initial 
  endowment and utility. Such assumptions lead to a fact that the information revealed by prices 
  play no role and thus, the Bayesian (maximin) rational expectation equilibrium allocations 
  becomes almost the same as the state-wise Walrasian allocations\footnote{But they are not the 
  same as both $T$ and $\Omega$ are infinite, see Example \ref{exam:sinclusion} in this paper.}. 
  
  \medskip
  Our aim of this paper is to apply the results and techniques developed in \cite{BCY:14, 
  Bhowmik-Cao:16} to the study of the ex-post core and its relationships to the fine core and the 
  set of rational expectations equilibrium allocations. We consider an oligopolistic economy with 
  asymmetric information in which the set of agents consists of some large agents and a continuum 
  of small agents. The uncertainty is model by a general probability space of states of nature in 
  which each agent is characterized by a state-dependent utility function, a random initial 
  endowment, an information partition and a prior belief. Firstly, we establish a result on the 
  existence and characterization of the ex-post core, which can be regarded as an extension of the 
  corresponding result in \cite{Einy-Moreno-Shitovitz:00b} to a framework with infinitely many states 
  of the nature. The proof of this result relies on the measurability of Walrasian equilibrium 
  correspondence with respect to the information structure in the economy (see Theorem 
  \ref{thm:mainexpost}). In the presence of the result in \cite{Bhowmik-Cao:16} and Aumann's Core 
  Equivalence Theorem, we conclude that Bayesian (maximin) rational expectation equilibrium 
  allocations are contained the ex post core. This is a version of the first fundamental theorem 
  of social welfare for large economies with asymmetric information. However, contrary to the 
  equivalence result for finitely many states of nature in \cite{Einy-Moreno-Shitovitz:00b}, we 
  provide an example of a continuum economy with asymmetric information and infinitely many states 
  of nature, in which the ex-post core strictly contains all rational expectations equilibrium 
  allocations. This means that the core-Walras equivalence can fail in a continuum economy with 
  asymmetric information when it has infinitely many states of nature. Secondly, we show that 
  under appropriate assumptions and the assumption that there are only finitely many different 
  information structures and all information is the joint information of agents, the fine core is 
  contained in the ex-post core. This extends the corresponding result in 
  \cite{Einy-Moreno-Shitovitz:00}. To obtain this result, following \cite{Greenberg-Shitovitz:86}, 
  we first associated an atomless economy with our oligopolistic economy so that all large agents 
  are broken into a continuum of small agents with similar characteristics. The idea of the proof 
  is as follows: if an allocation is not in the ex-post core of our original economy, it must
  not be a core allocation in some complete information economy and so in the corresponding 
  complete information atomless economy. Vind's theorem (see \cite{Vind:72}) implies that an 
  arbitrary large coalition can be chosen so that it discerns any state of nature. With the help 
  of some other techniques, we are able to show that the allocation is blocked by a coalition that 
  jointly has full information in our original economy and thus, it is not in the fine core.      
  
  \medskip
  The structure of the paper is as follows. Section \ref{sec: modelpluscore} presents the 
  theoretical framework and outlines the basic model. We also study several correspondences 
  associated with our basic model. These correspondences form the major part of our tool kits. 
  Section \ref{sec:ree_vs_epc} investigates a representation of the ex-post core and its 
  relationship with the set of rational expectations equilibrium allocations. Section 
  \ref{sec:FineCore} studies the relationship between the ex-post core and the fine core.
  Finally, we provide some concluding remarks in Section \ref{sec:conclusion}.
  
  \section{The Model and Associated Correspondences}
  \label{sec: modelpluscore}

  In this section, we describe a basic model of a pure exchange mixed economy with asymmetric 
  information.

  \subsection{The model} \label{subsec: model}

  We consider a pure exchange economy $\mathscr E$ with asymmetric information. The exogenous 
  uncertainty is described by a probability space $(\Omega, \mathscr{F}, \mathbb P)$, where 
  $\Omega$ is a set denoting all possible states of nature, the $\sigma$-algebra $\mathscr{F}$ 
  denotes possible events, and $\mathbb P$ is a complete probability measure. The space of agents 
  is a measure space $(T,\Sigma,\mu)$ with a complete, finite and positive measure $\mu$, where 
  $T$ is the set of agents, $\Sigma$ is the $\sigma$-algebra of measurable subsets of $T$ whose 
  economic weights on the market are given by $\mu$. Since $\mu(T)<\infty$, a classical result 
  in measure theory claims that $T$ can be decomposed into the union of two parts: one is atomelss 
  and the other contains at most countably many atoms, that is, $T= T_0 \cup T_1$, where $T_0$ is 
  the atomless part and $T_1$ is the union of at most countably many $\mu$-atoms, refer to 
  \cite[p.155]{Nielsen:97}. Let $\mathscr A =\{A_n: n\ge 1\}$ be the family of all atoms in 
  $T_1$, i.e., $T_1 =\bigcup_{n\ge 1} A_n$. Agents in $T_0$ are called ``\emph{small agents}", 
  who are un-influential agents (the price takers). According to a standard interpretation, we 
  can think that each $A_n$ arises from a group of small identical agents that decide to join 
  and to act on the market only together. As consequence of such agreements, no proper 
  subcoalitions of the group are possible and then the group is identified with an atom of $\mu$. 
  Agents in $T_1$ are called ``\emph{large agents}", who are influential ones (the oligopolies). 
  With an abuse of notation, we shall identify $T_1$ with $\mathscr A$, i.e., $T_1 = {\mathscr A}$. 
  The commodity space is the $\ell$-dimensional Euclidean space $\mathbb R^\ell$. For $\lambda >0$, 
  $B(0, \lambda)$ denotes the ball in $\mathbb R^\ell$ centred at $0$ with radius $\lambda$. The 
  partial order on $\mathbb R^\ell$ is denoted by $\leq$. More precisely, for any two vectors 
  $x =(x_1,..., x_\ell)$ and $y=(y_1,...,y_\ell)$ in $\mathbb R^\ell$, we write 
  $x\le y$ (or $y \ge x$) if $x_k \le y_k$ for all $1\le k \le \ell$. Furthermore, we write $x<y$ 
  (or $y>x$) when $x\le y$ and $x\ne y$, and $x\ll y$ (or $y\gg x$) when $x_k < y_k$ for all $1\le 
  k \le \ell$. Let $\mathbb R^\ell_+= \{x\in \mathbb R^\ell: x \ge 0\}$, and let $\mathbb 
  R^\ell_{++}=\{x\in \mathbb R^\ell_+: x\gg 0\}$. In each state, the \emph{consumption set} for 
  every agent $t\in T$ is $\mathbb R^\ell_+$.  Each agent $t\in T$ is characterized by a quadruple 
  $({\mathscr F}_t, U(t,\cdot,\cdot), a(t, \cdot), \mathbb P_t)$, where
  \begin{enumerate}
  \item[(i)] ${\mathscr F}_t$ is the $\sigma$-algebra generated by a measurable partition 
  $\Pi_t$ of $\Omega$ representing the \emph{private information} of agent $t$,
  \item[(ii)] $U(t,\cdot,\cdot): \Omega \times \mathbb R^\ell_+ \to\mathbb R$ is the 
  \emph{state-dependent utility function} of agent $t$,
  \item[(iii)] $a(t,\cdot): \Omega\rightarrow \mathbb R^\ell_+$ is the \emph{state-dependent 
  initial endowment} of agent $t$, and
  \item[(iv)] $\mathbb P_t$ is a probability measure on $\mathscr F$, giving the \emph{prior 
  belief} of agent $t$.
  \end{enumerate}
  The quadruple $({\mathscr F}_t, U(t,\cdot,\cdot), a(t, \cdot), \mathbb P_t)$ is sometimes 
  known as \emph{characteristics} of agent $t$. Two agents are said to be the \emph{same 
  type} if they have the same characteristics. Formally, the economy ${\mathscr E}$ can be 
  expressed by
  \[
  {\mathscr E} = \{(\Omega, \mathscr{F}, \mathbb P);\ (T,\Sigma,
  \mu);\ \mathbb R^\ell_+;\ ({\mathscr F}_t, U(t,\cdot,\cdot),
  a(t, \cdot), \mathbb P_t)_{t\in T}\}.
  \]
  In the complete information Arrow-Debreu-McKenzie model, prices are vectors in $\mathbb 
  R^\ell_+ \setminus \{ 0\}$. Following the standard treatment in the literature (e.g., see 
  \cite{Aumann:66}), price vectors are normalized so that their sum is 1. 
  
  \medskip
  In this paper, we use the symbol $\Delta$ to denote the simplex of normalized price 
  vectors, i.e.,
  \[
  \Delta= \left\{p \in \mathbb R^\ell_{+}: \sum_{h= 1}^\ell 
  p^h =1\right\}.
  \]
  Put  $\Delta_+= \Delta\cap \mathbb R^\ell_{++}$. Throughout the paper, $\Delta$ and 
  $\Delta_+$ are equipped with the relative Euclidean topology. A \emph{price system} of 
  $\mathscr E$ is an $\mathscr F$-measurable function $\pi: \Omega \to \Delta$, where 
  $\Delta$ is equipped with the Borel structure $\mathscr B(\Delta)$ generated by the 
  relative Euclidean topology. 
  
  \medskip
  Let $\sigma(\pi)$ be the smallest $\sigma$-algebra contained in $\mathscr F$ and 
  generated by a price system $\pi$. Intuitively, $\sigma(\pi)$ represents the information
  revealed by $\pi$. The combination of agent $t$'s private information $\mathscr F_t$ 
  and the information revealed by the price system $\pi$ is given by the smallest 
  $\sigma$-algebra $\mathscr G_t$ that contains both $\mathscr F_t$ and $\sigma(\pi)$. 
  Formally, $\mathscr G_t =\mathscr F_t \vee \sigma(\pi)$. For any $\omega \in \Omega$, 
  let $\mathscr G_t(\omega)$ denote the smallest element of $\mathscr G_t$ that contains 
  $\omega$.

  \medskip
  As interpreted in \cite{de Castro-Pesce-Yannelis:11}, the economy $\mathscr E$ extends 
  over three time periods: ex ante ($\tau=0$), interim ($\tau=1$) and ex post ($\tau=2$).
  At $\tau=0$, the state space, the partitions, the structure of the economy and the 
  price functional $\pi: \Omega \to \Delta$ are common knowledge. This stage does not 
  play any role in our analysis and it is assumed just for a matter of clarity. At 
  $\tau =1$, each individual learns his private information and the prevailing prices 
  $\pi(\omega)$, and thus learns $\mathscr G_t$. With these in his mind, the agent
  plans how much he will consume $x(\omega)$. However, his actual consumption may be
  contingent to the final state of the nature, which is not yet known by him. The 
  individual agent only knows that one of the states $\omega' \in \mathscr G_t(\omega)$ 
  will be realized. Therefore, he needs to make sure that he will be able to pay his 
  consumption plan $x(\omega')$ for all $\omega'\in \mathscr G_t(\omega)$. At $\tau=2$, 
  each individual agent $t\in T$ receives and consumes his entitlement $f_t(\omega)$.
  
  \medskip
  Recall that a function $u: \mathbb R^\ell_+ \to \mathbb R$ is \emph{strictly increasing} 
  if $u(x) < u(y)$ for any $x, y \in \mathbb R^\ell_+$ with $x< y$, and it is \emph{quasi 
  concave} if
  \[
  u(\alpha x + (1-\alpha)y) \ge \min \{u(x), u(y)\}
  \]
  for any $x, y \in \mathbb R^\ell_+$ with $x \ne y$ and any $0< \alpha < 1$. In ``$\ge$" 
  in the above inequality is replaced with ``$>$", then $u$ is called \emph{strictly quasi 
  concave}.

  \medskip
  Throughout the paper, the following standard assumptions will be used. These assumptions 
  are similar to those in \cite{BCY:14, Bhowmik-Cao:16}.

  \medskip
  \noindent
  ({A}$_1$) The initial endowment function $a:(T,\Sigma,\mu)\times (\Omega, \mathscr F, 
  \mathbb P) \rightarrow \mathbb R^\ell_+$ is $\Sigma \otimes {\mathscr F}$-measurable 
  such that $a(\cdot, \omega)$ is Bochner integrable and $\displaystyle \int_T a(\cdot, 
  \omega)d\mu \gg 0$ for each $\omega \in\Omega$.
  
  \medskip
  \noindent
  (${\rm A}'_1$) The initial endowment function $a:(T,\Sigma,\mu)\times (\Omega, \mathscr 
  F, \mathbb P) \rightarrow \mathbb R^\ell_+$ is $\Sigma \otimes {\mathscr F}$-measurable 
  such that $a(\cdot, \omega)$ is Bochner integrable and $a(\cdot, \omega) \gg 0$ $\mu$-a.e. 
  on $T$ for each $\omega \in\Omega$.

  \medskip
  \noindent
  ({A}$_2$) $U(\cdot,\cdot, x): (T,\Sigma,\mu) \times (\Omega, \mathscr F, \mathbb P) \to 
  \mathbb R$ is $\Sigma \otimes {\mathscr F}$-measurable for all $x \in \mathbb R^\ell_+$.

  \medskip
  \noindent
  ({A}$_3$) For each $(t, \omega) \in T\times \Omega$, $U(t, \omega, \cdot): \mathbb 
  R^\ell_+ \to \mathbb R$ is continuous and strictly increasing.
   
  \medskip
  \noindent
  ({A}$_4$)  For each $(t,\omega)\in T\times \Omega$, $U(t,\omega,\cdot)$
  is strictly quasi-concave.
  
  \medskip
  \noindent
  (${\rm A}'_4$) For each $(t,\omega)\in T_1\times \Omega$, $U(t, \omega, \cdot)$ 
  is quasi-concave.
  
  \medskip
  Here, we would like to add some comments on these assumptions. Note that the condition 
  ``$\displaystyle \int_T a(\cdot, \omega)d\mu \gg 0$ for each $\omega \in\Omega$" 
  in ({A}$_1$) or ``$a(\cdot, \omega) \gg 0$ $\mu$-a.e. on $T$ for each $\omega \in\Omega$" 
  in (${\rm A}'_1$), which implies that no commodity is totally absent from the market, 
  has been commonly used for results on the existence of an equilibrium, for instance, see 
  \cite{Aumann:66, BCY:14, Bhowmik-Cao:16, Einy-Moreno-Shitovitz:00b, Einy-Moreno-Shitovitz:00}. 
  The joint measurability of the initial endowment $a$ in ({A}$_1$) and (${\rm A}'_1$) has 
  been used in \cite{BCY:14, Bhowmik-Cao:16} for general models of asymmetric information 
  economies with infinitely many states of nature. The assumption (${\rm A}'_1$) is 
  stronger than ({A}$_1$) and is used in \cite{Bhowmik-Cao:paper2, Bhowmik-Cao:paper3}. 
  Assumption ({A}$_2$) is equivalent to the measurability condition used in \cite{Aumann:64,
  Aumann:66}. Since then, it has been widely used in the literature, see \cite{BCY:14,
  Bhowmik-Cao:16, Einy-Moreno-Shitovitz:00, Einy-Moreno-Shitovitz:00b, Einy-Moreno-Shitovitz:01}.
  Although ({A}$_1$) and ({A}$_2$) are not used in Einy et al. \cite{Einy-Moreno-Shitovitz:00}, 
  $U(t, \cdot, x)$ and $a(t,\cdot)$ are required to be $\mathscr F$-measurable for all $(t,x)
  \in T \times \mathbb R^\ell_+$. Finally, ({A}$_3$), ({A}$_4$) and (${\rm A}'_4$) impose 
  properties on the agents' utility functions. These assumptions have been quite commonly
  used in the literature. 

  \medskip
  A member $S$ of $\Sigma$ with $\mu(S)> 0$ is called a \emph{coalition} of $\mathscr E$. Let 
  $L_1(\mu, \mathbb R^\ell)$ denote the set of all equivalent classes of Bochner integrable 
  functions from $T$ into $\mathbb R^\ell$. An \emph{assignment} in $\mathscr E$ is a function 
  $f: T \times \Omega\rightarrow\mathbb R^\ell_+$ such that for every $\omega \in \Omega$, 
  $f(\cdot, \omega) \in L_1(\mu, \mathbb R^\ell)$, and for every $t\in T$, $f(t,\cdot)$ 
  is $\mathscr F$-measurable. If an assignment $f$ is also \emph{feasible}, i.e., for 
  every $\omega \in \Omega$,
  \[
  \int_T f(\cdot, \omega) d\mu= \int_T a(\cdot, \omega)d\mu,
  \]
  then it is called an \emph{allocation}. Note that under ({A}$_1$), the initial 
  endowment $a$ is an allocation in $\mathscr E$.

  \subsection{Correspondences Associated with $\mathscr E$} \label{sec:associated}
  
  Following \cite{BCY:14, Bhowmik-Cao:16}, we define a function $\delta:\Delta_+\to\mathbb R_{++}$ 
  such that for each $p= (p^1,\cdots, p^\ell)\in \Delta_+$,
  \[
  \delta(p) = \min\left\{p^h: 1\le h \le\ell \right\}.
  \] 
  For any $(t, \omega, p)\in T\times \Omega\times \Delta_+$, let
  \[
  \gamma(t, \omega, p)= \frac{1}{\delta(p)}\sum_{h=1}^\ell a^h(t,
  \omega), \hspace{1em} \mbox{and} \hspace{1em} b(t, \omega,p)= \gamma(t, \omega, p){\bf 1},
  \]
  where ${\bf 1}=(1,\cdots,1)\in \mathbb{R^\ell}$. Define the correspondence $X: T\times \Omega\times 
  \Delta_+ \rightrightarrows \mathbb R^\ell_+$ by 
  \[
  X(t,\omega,p)=\{x\in \mathbb R^\ell_+: x\le b(t,\omega,p)\}
  \]
  for all $(t, \omega, p)\in T\times \Omega\times \Delta_+$. The \emph{budget correspondence} 
  $B: T\times \Omega\times \Delta \rightrightarrows \mathbb R^\ell_+$ is defined by
  \[
  B(t, \omega, p)= \left\{x\in \mathbb
  R^\ell_+: \langle p, x\rangle\le \langle p, a(t,\omega)
  \rangle\right\}
  \]
  for all $(t,\omega,p) \in T \times \Omega \times \Delta$. Note that $X$ and $B$ are non-empty, 
  closed- and convex-valued such that $B(t, \omega, p)\subseteq X(t, \omega, p)$ for all 
  $(t, \omega, p)\in T\times \Omega\times \Delta_+$. Furthermore, the compactness of 
  $X(t, \omega, p)$ implies that $B(t,\omega,p)$ is compact for every $(t, \omega, p)\in T\times 
  \Omega\times \Delta_+$.
  
  \medskip
  Following \cite{Aliprantis-Border:06}, we say that a correspondence $F: (T, \Sigma, \mu)\rightrightarrows 
  \mathbb R^\ell$ is \emph{weakly $\Sigma$-measurable} if 
  \[
  F^{-1}(V) =\{ t \in T: F(t) \cap V \ne \emptyset\} \in \Sigma
  \]
  for all open subset $V$ of $\mathbb R^\ell$. Wherever no confusion arises in the sequel, we shall omit 
  $\Sigma$ in the definition of a weakly $\Sigma$-measurable correspondence. A function $f: (T, \Sigma, \mu)
  \rightarrow \mathbb R^\ell$ is called a \emph{measurable selection of $F$} if $f$ is $\Sigma$-measurable 
  and $f(t) \in F(t)$ $\mu$-a.e..
  
  \begin{lemma}[\cite{Aubin:90}] \label{lem:measurable}
  Let $F: (T, \Sigma, \mu)\rightrightarrows \mathbb R^\ell$ be a correspondence. Then the following 
  statements are equivalent:
  \begin{itemize}
  \item[(i)] $F$ is weakly $\Sigma$-measurable.
  \item[(ii)] $F$ has a measurable graph, that is, ${\rm Gr}_F \in \Sigma \otimes {\mathscr B}
  ({\mathbb R}^\ell)$.
  \item[(iii)] For every $x \in {\mathbb R}^\ell$, ${\rm dist}(x, F(\cdot)): T \to \mathbb R_+$ is 
  $\Sigma$-measurable.
  \end{itemize}
  \end{lemma}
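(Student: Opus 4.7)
The plan is to run the cycle $(i) \Rightarrow (iii) \Rightarrow (ii) \Rightarrow (i)$. The completeness of the measure $\mu$, part of the standing assumptions on $(T, \Sigma, \mu)$, will play an essential role in the last step.

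For $(i) \Rightarrow (iii)$, I would exploit the identity
$$\{t \in T : \mathrm{dist}(x, F(t)) < r\} = F^{-1}(B(x, r)),$$
valid for every $x \in \mathbb{R}^\ell$ and $r > 0$. Since the half-lines $(-\infty, r)$ generate $\mathscr{B}(\mathbb{R})$, weak measurability of $F$ directly yields $\Sigma$-measurability of $\mathrm{dist}(x, F(\cdot))$.

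For $(iii) \Rightarrow (ii)$, define $\phi : T \times \mathbb{R}^\ell \to \mathbb{R}$ by $\phi(t, y) = \mathrm{dist}(y, F(t))$. For fixed $t$ the map $y \mapsto \phi(t, y)$ is $1$-Lipschitz and hence continuous; for fixed $y$ the map $t \mapsto \phi(t, y)$ is $\Sigma$-measurable by hypothesis. The standard Carath\'eodory-type argument, which approximates $\phi(t, y)$ uniformly in $y$ by values $\phi(t, q_n)$ along a countable dense subset $\{q_n\} \subseteq \mathbb{R}^\ell$, then delivers joint $\Sigma \otimes \mathscr{B}(\mathbb{R}^\ell)$-measurability of $\phi$. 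Consequently $\mathrm{Gr}_F = \phi^{-1}(\{0\})$ belongs to $\Sigma \otimes \mathscr{B}(\mathbb{R}^\ell)$.

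The main obstacle is $(ii) \Rightarrow (i)$. For an open set $V \subseteq \mathbb{R}^\ell$, one writes
$$F^{-1}(V) = \mathrm{proj}_T\bigl(\mathrm{Gr}_F \cap (T \times V)\bigr),$$
and the set inside the projection clearly lies in $\Sigma \otimes \mathscr{B}(\mathbb{R}^\ell)$. The remaining task is to invoke the measurable projection theorem, which asserts that the projection onto $T$ of any set in $\Sigma \otimes \mathscr{B}(\mathbb{R}^\ell)$ belongs to $\Sigma$, provided $\mu$ is complete and $\sigma$-finite. This is a nontrivial descriptive set theoretic fact, and this is precisely where completeness of $(T, \Sigma, \mu)$, postulated throughout the paper, is indispensable; without it the implication can fail.
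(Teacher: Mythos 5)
The paper does not prove this lemma at all: it is quoted from Aubin--Frankowska \cite{Aubin:90} and used as a black box, so there is no in-paper argument to compare yours against. Your cycle $(i)\Rightarrow(iii)\Rightarrow(ii)\Rightarrow(i)$ is the standard proof of the cited result and each step is sound: the identity $\{t:\mathrm{dist}(x,F(t))<r\}=F^{-1}(B(x,r))$ handles the first implication, the Carath\'eodory approximation along a countable dense set gives joint measurability of $(t,y)\mapsto\mathrm{dist}(y,F(t))$, and the von Neumann--Aumann projection theorem (which needs exactly the completeness and $\sigma$-finiteness that the paper's complete finite $\mu$ supplies) closes the loop. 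One caveat you should make explicit: the identification $\mathrm{Gr}_F=\phi^{-1}(\{0\})$ is valid only when $F$ is closed-valued; otherwise $\phi^{-1}(\{0\})$ is the graph of $t\mapsto\overline{F(t)}$, and in fact $(iii)\Rightarrow(ii)$ --- hence the lemma as transcribed --- is false without closedness. For instance, take $F(t)=\mathbb{R}^\ell\setminus\{h(t)\}$ for a non-measurable map $h$: every distance function $\mathrm{dist}(x,F(\cdot))$ vanishes identically and $F$ is trivially weakly measurable, yet $\mathrm{Gr}_F$ is the complement of the non-product-measurable set $\mathrm{Gr}_h$. The closed-valuedness hypothesis is present in the source and is satisfied by every correspondence to which the paper applies the lemma ($B$, $X$, $C^X$, $C_n$ and the selection correspondence $A$ in Claim 1 of Theorem \ref{thm:W} are all closed-valued), so this is an omission in the statement rather than an error in your argument, but your step $(iii)\Rightarrow(ii)$ is precisely where it must be invoked.
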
  
  
  The following proposition is similar to \cite[Proposition 4.1]{BCY:14} and is a special case of 
  \cite[Lemma 2]{Bhowmik-Cao:16}.
   
  \begin{proposition}\label{prop:budcaratheodory}
  Assume that an economy $\mathscr E$ satisfies \emph{({A$_1$})}. Then $B$ is weakly 
  $\Sigma\otimes \mathscr F\otimes \mathscr B(\Delta)$-measurable and $X$ is weakly 
  $\Sigma\otimes \mathscr F\otimes \mathscr B(\Delta_+)$-measurable.
  \end{proposition}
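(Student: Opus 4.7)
The natural strategy is to appeal to Lemma \ref{lem:measurable} and prove that both correspondences have measurable graphs with respect to the relevant product $\sigma$-algebras; since the lemma's equivalence ``weakly measurable $\iff$ measurable graph'' is a statement about a measurable space and a correspondence into $\mathbb R^\ell$, I would apply it with ``base'' measurable space $(T\times\Omega\times\Delta,\Sigma\otimes\mathscr F\otimes\mathscr B(\Delta))$ for $B$, and analogously with $\mathscr B(\Delta_+)$ for $X$. Thus the entire burden is to show that the two graphs sit in the appropriate product $\sigma$-algebra.

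For the budget correspondence $B$, the graph is
\[
{\rm Gr}_B = \bigl\{(t,\omega,p,x)\in T\times\Omega\times\Delta\times\mathbb R^\ell_+ : \langle p, x\rangle - \langle p, a(t,\omega)\rangle \le 0\bigr\}.
\]
I would argue that the function $\Phi(t,\omega,p,x) = \langle p, x - a(t,\omega)\rangle$ is $\Sigma\otimes\mathscr F\otimes \mathscr B(\Delta)\otimes\mathscr B(\mathbb R^\ell)$-measurable: under (A$_1$), $a$ is $\Sigma\otimes\mathscr F$-measurable into $\mathbb R^\ell_+$; the inner product is jointly continuous in $(p,x)$; hence $\Phi$ is Carathéodory (measurable in $(t,\omega)$ and continuous in $(p,x)$), and so jointly measurable. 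Intersecting $\Phi^{-1}((-\infty,0])$ with the closed set $T\times\Omega\times\Delta\times\mathbb R^\ell_+$ gives a measurable set, and therefore ${\rm Gr}_B$ is measurable. Lemma \ref{lem:measurable} then yields weak measurability.

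For $X$, the graph is
\[
{\rm Gr}_X = \bigl\{(t,\omega,p,x)\in T\times\Omega\times\Delta_+\times\mathbb R^\ell_+ : x \le b(t,\omega,p)\bigr\}.
\]
The function $\delta$ is the pointwise minimum of finitely many coordinate projections, hence continuous on $\Delta_+$, and strictly positive there; combined with the $\Sigma\otimes\mathscr F$-measurability of each $a^h$, the scalar field $\gamma(t,\omega,p) = \delta(p)^{-1}\sum_{h=1}^\ell a^h(t,\omega)$ is again a Carathéodory function, hence jointly measurable, and so is $b = \gamma\,\mathbf 1$. The graph of $X$ is then the intersection of the $\ell$ measurable sets $\{x^k\le b^k(t,\omega,p)\}$ with $T\times\Omega\times\Delta_+\times\mathbb R^\ell_+$, so it is measurable, and weak measurability follows from Lemma \ref{lem:measurable}.

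The only real subtlety is justifying the jump from the ``one-variable'' form of Lemma \ref{lem:measurable} to the product space; this is harmless because the Carathéodory-to-jointly-measurable step produces an honest measurable function on the product, which is exactly the hypothesis the lemma's graph formulation requires (the lemma is really a statement about a single measurable space, and we simply take that space to be the product). The rest is bookkeeping, and indeed the result is essentially a direct specialization of \cite[Lemma 2]{Bhowmik-Cao:16} as the statement already notes.
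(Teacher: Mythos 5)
The paper offers no proof of this proposition at all---it simply defers to \cite[Proposition 4.1]{BCY:14} and \cite[Lemma 2]{Bhowmik-Cao:16}---and your argument (joint measurability of the Carath\'eodory functions $\langle p, x-a(t,\omega)\rangle$ and $b(t,\omega,p)$, hence measurability of the two graphs, hence weak measurability via Lemma \ref{lem:measurable} applied with the product as the base measurable space) is correct and is precisely the route those references take. The one point worth noting is that the implication (ii)$\Rightarrow$(i) of Lemma \ref{lem:measurable} rests on the projection theorem and therefore, strictly speaking, on completeness of the underlying measure space, which the product $\sigma$-algebra $\Sigma\otimes\mathscr F\otimes\mathscr B(\Delta)$ need not enjoy; but the paper itself invokes the same implication on product spaces (e.g., to extract the measurable selection of $A$ in Claim 1 of Theorem \ref{thm:W}), so your proof is at the same level of rigor as the source.
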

  
  Define the correspondences $C: T\times\Omega\times\Delta\rightrightarrows \mathbb R^\ell_+$ and  $C^{X}: 
  T\times\Omega\times\Delta_+\rightrightarrows \mathbb R^\ell_+$by
  \[
  C(t, \omega, p)= \left\{ x \in \mathbb R^\ell_+: U(t,\omega, x) \ge U(t, \omega, y) \mbox{ for all } 
  y\in B(t, \omega, p)\right\}
  \]
  and
  \[
  C^X(t, \omega, p)= C(t, \omega, p)\cap X(t, \omega, p).
  \]
  By ({A}$_3$), for every $x\in C(t,\omega,p)$ and $(t,\omega,p) \in T\times\Omega \times\Delta$, 
  $\langle p, x\rangle\ge \langle p, a(t, \omega)\rangle$. Furthermore, it is easy to see that 
  \[
  B(t, \omega, p)\cap C(t, \omega, p)= B(t, \omega, p)\cap C^X (t, \omega, p)
  \]
  for all $(t, \omega, p)\in T\times \Omega \times \Delta_+$. Note that under ({A}$_3$), $U(t, \omega, 
  \cdot)$ is continuous on the non-empty compact set $B(t,\omega, p)$ for all $(t, \omega, p)\in T\times 
  \Omega \times \Delta_+$. Thus, one has
  \[
  B(t, \omega, p)\cap C(t, \omega, p) \ne \emptyset
  \]
  for all $(t, \omega, p)\in T\times \Omega \times \Delta_+$.

  \medskip 
  The following proposition is similar to \cite[Proposition 4.2]{BCY:14}.
  
  \begin{proposition}\label{prop:fmeasure}
  Assume that an economy $\mathscr E$ satisfies \emph{({A$_1$})-({A$_3$})}. Then $C^X$ is weakly
  $\Sigma\otimes \mathscr F\otimes \mathscr B(\Delta_+)$-measurable.
  \end{proposition}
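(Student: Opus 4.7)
The plan is to use Lemma \ref{lem:measurable} and show that $\mathrm{Gr}_{C^X}$ is a measurable subset of $T\times\Omega\times\Delta_+\times\mathbb R^\ell_+$. Observe first that
\[
\mathrm{Gr}_{C^X} \;=\; \mathrm{Gr}_X \;\cap\; \bigl\{(t,\omega,p,x): U(t,\omega,x)\ge v(t,\omega,p)\bigr\},
\]
where $v(t,\omega,p):=\max_{y\in B(t,\omega,p)} U(t,\omega,y)$. The maximum is attained since $B(t,\omega,p)$ is compact and $U(t,\omega,\cdot)$ is continuous by (A$_3$). Proposition \ref{prop:budcaratheodory} together with Lemma \ref{lem:measurable} gives that $\mathrm{Gr}_X$ is measurable, so it remains to verify the measurability of $v$ and of the map $(t,\omega,p,x)\mapsto U(t,\omega,x)$.

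For the second of these, (A$_2$) and (A$_3$) make $U$ a Carath\'eodory function on $(T\times\Omega)\times \mathbb R^\ell_+$, hence jointly $\Sigma\otimes\mathscr F\otimes\mathscr B(\mathbb R^\ell_+)$-measurable; composing with the projection $(t,\omega,p,x)\mapsto(t,\omega,x)$ preserves measurability. For $v$, the key step is to pass from the parametric maximum to a countable supremum. By Proposition \ref{prop:budcaratheodory}, $B$ is weakly $\Sigma\otimes\mathscr F\otimes\mathscr B(\Delta)$-measurable with closed non-empty values, so the Castaing representation theorem yields a countable family of measurable selections $\{f_n\}_{n\ge 1}$ of $B$ with
\[
B(t,\omega,p)=\overline{\{f_n(t,\omega,p):n\ge 1\}}
\]
for every $(t,\omega,p)\in T\times\Omega\times\Delta_+$. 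Continuity of $U(t,\omega,\cdot)$ then gives
\[
v(t,\omega,p)=\sup_{n\ge 1} U\bigl(t,\omega,f_n(t,\omega,p)\bigr).
\]
Each summand is measurable as the composition of the measurable map $(t,\omega,p)\mapsto(t,\omega,f_n(t,\omega,p))$ with the jointly measurable $U$, so $v$ is $\Sigma\otimes\mathscr F\otimes\mathscr B(\Delta_+)$-measurable as a countable supremum.

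Combining these facts, $\mathrm{Gr}_{C^X}$ is the intersection of two measurable sets, hence measurable. Another application of Lemma \ref{lem:measurable} yields the weak $\Sigma\otimes\mathscr F\otimes\mathscr B(\Delta_+)$-measurability of $C^X$. I expect the only subtle point to be justifying the reduction from the parametric $\max$ over $B(t,\omega,p)$ to a countable supremum; that is precisely where the Castaing selection theorem (applicable thanks to Proposition \ref{prop:budcaratheodory} and compactness of the budget set on $\Delta_+$) is indispensable, and why the argument is carried out on $\Delta_+$ rather than on all of $\Delta$.
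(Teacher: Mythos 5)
Your proof is correct and follows essentially the same route as the paper: both arguments rest on the Castaing representation of $B$ obtained from Proposition~\ref{prop:budcaratheodory} together with the continuity of $U(t,\omega,\cdot)$ to reduce the defining condition of $C$ to countably many measurable conditions. The only difference is cosmetic --- you package the reduction as measurability of the value function $v(t,\omega,p)=\sup_{n\ge 1}U(t,\omega,f_n(t,\omega,p))$ and intersect graphs, whereas the paper writes $C=\bigcap_{n\ge 1}C_n$ and intersects the correspondences $C_n$ directly with the compact-valued $X$.
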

  
  \begin{proof}
  By Proposition \ref{prop:budcaratheodory}, $B$ is weakly $\Sigma\otimes \mathscr F \otimes \mathscr B
  (\Delta)$-measurable. Thus, by \cite[Corollary 18.14]{Aliprantis-Border:06}, there exists a sequence 
  $\{f_n: n \ge 1\}$ of $\Sigma\otimes \mathscr F \otimes \mathscr B (\Delta)$-measurable functions 
  from $T\times \Omega \times \Delta$ to ${\mathbb R}^\ell_+$ such that
  \[
  B(t, \omega,p)= \overline{\{f_n(t,\omega,p): n \ge 1\}}
  \]
  for all $(t,\omega, p)\in T\times \Omega\times \Delta$. 
  
  For each $n \ge 1$, define $C_n: T\times \Omega \times \Delta \rightrightarrows \mathbb R^\ell_+$ by 
  letting
  \[
  C_n(t,\omega,p)= \left\{x\in \mathbb R^\ell_+: U(t, \omega, x)
  \ge U(t, \omega, f_n(t,\omega,p))\right\},
  \]
  and
  $\xi_n:T\times \Omega\times \Delta \times \mathbb R^\ell_+\to \mathbb R$ by letting
  \[
  \xi_n(t, \omega, p, x)=U(t, \omega, x)- U(t, \omega, f_n(t,\omega,p)).
  \]
  Note that $\xi_n(\cdot,\cdot,\cdot,x)$ is $\Sigma\otimes\mathscr F\otimes\mathscr B(\Delta)$-measurable 
  for all $x\in \mathbb R^\ell_+$, and 
  \[
  C(t,\omega, p)= \bigcap\{C_n(t,\omega,p):n\ge 1\}
  \]
  for all $(t,\omega,p)\in T\times \Omega\times \Delta$. It follows that for all $(t,\omega,p)\in T
  \times \Omega\times \Delta_+$,
  \[
  C^X(t,\omega,p)=\bigcap\{C_n(t,\omega,p):n\ge 1\}\cap X(t, \omega, p).
  \]
  Applying an argument similar to that in Proposition \ref{prop:budcaratheodory}, it can be shown that 
  each $C_n$ is $\Sigma \otimes \mathscr F \otimes \mathscr B(\Delta)$-measurable. Since $X$ is 
  compact-valued, then $C^X$ is $\Sigma\otimes \mathscr F\otimes \mathscr B(\Delta_+)$-measurable.
  \end{proof}
  
  The idea of the next lemma is included in the proof of \cite[Theorem 4.3]{BCY:14}. For the sake of
  self-completeness of this paper, we extracted it here as a separate lemma with a complete proof.
  
  \begin{lemma}\label{lem:CX}
  Assume that an economy $\mathscr E$ satisfies {\rm (A$_1)$}-{\rm (A$_3$)}. Let $\{p_n:n\ge 1\}\subseteq 
  \Delta_+$ converge to some $p\in \Delta_+$. For each $(t,\omega, p) \in T\times \Omega\times \Delta_+$,
  \[
  C^X(t, \omega, p) \subseteq {\rm Li}\ C^X(t, \omega, p_n).
  \]
  \end{lemma}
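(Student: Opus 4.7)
The plan is to show that every $x \in C^X(t,\omega,p)$ can be realized as the limit of some sequence $x_n \in C^X(t,\omega,p_n)$. Write $b = b(t,\omega,p)$, $b_n = b(t,\omega,p_n)$, and $u^*(q) = \max\{U(t,\omega,y) : y \in B(t,\omega,q)\}$. Two continuity facts underpin the construction: $b(t,\omega,\cdot)$ is continuous on $\Delta_+$ (since $\delta$ is and $a(t,\omega)$ is fixed), and $u^*$ is continuous on $\Delta_+$ by Berge's maximum theorem, applied to the continuous compact-valued budget correspondence $B(t,\omega,\cdot)$ together with the continuous function $U(t,\omega,\cdot)$ from (A$_3$). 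A third useful observation is that $B(t,\omega,q) \subseteq X(t,\omega,q)$ for every $q \in \Delta_+$ (a consequence of $p^h \ge \delta(p)$), so by strict monotonicity the corner $b_n = \gamma(t,\omega,p_n)\mathbf{1}$ maximises $U(t,\omega,\cdot)$ on $X(t,\omega,p_n)$ and hence $U(t,\omega,b_n) \ge u^*(p_n)$.

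My construction of $x_n$ proceeds in two stages. First, I project $x$ into $X(t,\omega,p_n)$ by taking the componentwise minimum $x_n^{(1)} := x \wedge b_n$; since $x \le b$ and $b_n \to b$, we have $x_n^{(1)} \le b_n$ and $x_n^{(1)} \to x$. Second, I run a line search along the segment from $x_n^{(1)}$ to $b_n$: define
\[
\phi_n(\alpha) := U(t,\omega, (1-\alpha) x_n^{(1)} + \alpha b_n), \qquad \alpha \in [0,1],
\]
which is continuous with $\phi_n(1) = U(t,\omega,b_n) \ge u^*(p_n)$. The set $\{\alpha \in [0,1] : \phi_n(\alpha) \ge u^*(p_n)\}$ is non-empty and closed, so its infimum $\alpha_n$ belongs to it. Set $x_n := (1-\alpha_n) x_n^{(1)} + \alpha_n b_n$. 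Convexity of $X(t,\omega,p_n)$ yields $x_n \in X(t,\omega,p_n)$, and by construction $U(t,\omega,x_n) \ge u^*(p_n)$, so $x_n \in C^X(t,\omega,p_n)$.

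The remaining task, and the main obstacle, is to show $x_n \to x$. If $b = x$, both $x_n^{(1)}$ and $b_n$ tend to $x$, hence $x_n \to x$ for any choice of $\alpha_n$. Suppose instead $b \ne x$, i.e., $b \ge x$ with strict inequality in some coordinate. I argue by contradiction that $\alpha_n \to 0$: if $\alpha_n \ge \bar\alpha > 0$ along some subsequence, then for every $\alpha \in (0, \bar\alpha)$ the minimality of $\alpha_n$ forces $\phi_n(\alpha) < u^*(p_n)$. Passing to the limit along the subsequence and using continuity of $U$, together with $x_n^{(1)} \to x$, $b_n \to b$, and $u^*(p_n) \to u^*(p)$, I obtain
\[
U(t,\omega, (1-\alpha)x + \alpha b) \le u^*(p) \le U(t,\omega,x).
\]
Since $(1-\alpha) x + \alpha b \ge x$ with strict inequality in at least one coordinate (as $\alpha > 0$ and $b > x$ somewhere), strict monotonicity from (A$_3$) gives $U(t,\omega, (1-\alpha) x + \alpha b) > U(t,\omega,x)$, a contradiction. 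Hence $\alpha_n \to 0$ and $x_n \to x$. The subtlety is that the line-search parameter must contract to zero even though $u^*(p_n)$ may exceed $u^*(p)$ for small $n$; it is precisely strict monotonicity in the direction $b - x$ that rules this out.
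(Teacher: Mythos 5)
Your proof is correct, but it takes a genuinely different route from the paper's. The paper fixes $d\in C^X(t,\omega,p)$, bumps it by a small fixed amount $\delta_i$ in a single coordinate to get $d^i\le b(t,\omega,p)$, approximates $d^i$ by points of $X(t,\omega,p_n)$, and shows these approximants eventually lie in $C^X(t,\omega,p_n)$ by contradiction: if not, one extracts budget points $b_k\in B(t,\omega,p_{n_k})$ doing strictly better, passes to a convergent subsequence with limit in $B(t,\omega,p)$, and contradicts $d\in C^X(t,\omega,p)$ via strict monotonicity (so only upper hemicontinuity/compactness of the budget sets is used, and the conclusion follows by letting $i\to\infty$ in a second limit). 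You instead build a single approximating sequence directly: project by the componentwise meet $x\wedge b(t,\omega,p_n)$, then line-search toward the corner $b(t,\omega,p_n)$ — which dominates the whole budget set — until the indirect utility level $u^*(p_n)$ is reached, and show the search parameter $\alpha_n$ vanishes because otherwise strict monotonicity in the direction $b-x$ would force $U$ above $u^*(p)\le U(t,\omega,x)$ in the limit. The mechanisms differ: the paper's perturbation is a fixed coordinate bump validated for all large $n$ at once, while yours is adaptive, calibrated exactly to $u^*(p_n)$, and requires (upper semi)continuity of the value function $u^*$, which you get from Berge's theorem (you should note that continuity of $B(t,\omega,\cdot)$ on $\Delta_+$ needs the cheaper-point condition $\langle p,a(t,\omega)\rangle>0$, with the degenerate case $a(t,\omega)=0$ handled separately since then $B\equiv\{0\}$; alternatively, only $\limsup_n u^*(p_n)\le u^*(p)$ is actually needed at your contradiction step, and that follows from upper hemicontinuity alone, exactly the compactness fact the paper uses). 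Your construction has the advantage of producing an explicit sequence in one pass with no diagonal argument; the paper's avoids any appeal to the maximum theorem. Both hinge on strict monotonicity of $U(t,\omega,\cdot)$ at the same critical moment.
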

  
  \begin{proof}
  Let $d\in C^{X} (t,\omega, p)$ be an arbitrarily selected vector. If $d= b(t, \omega, p)$, then 
  $b(t,\omega,p_n)\in C^X (t, \omega, p_n)$ and $\{b(t,\omega,p_n): n\ge 1\}$ converges to $d$. Now, 
  assume $d< b(t, \omega,p)$. Select some $\delta> 0$ and $1\le j \le \ell$ such that
  \[
  d+ (0,\cdots, \underset{j{\rm th}}{\delta},\cdots, 0)\le b(t, \omega, p),
  \]
  and a sequence $\{\delta_i: i\ge 1\}$ in $(0, \delta]$ converging to $0$. For each $i\ge 1$, let
  \[
  d^i= d+ (0,\cdots, \underset{j{\rm th}}{\delta_i},\cdots, 0),
  \]
  and choose a sequence $\{d_n^i: n\ge 1\}$ such that for each $n$, $d_n^i \in X(t,\omega, p_n)$ and 
  $\{d_n^i: n \ge 1\}$ converges to $d^i$. It is claimed that for each $i\ge 1$, $d_n^i\in C^{X}(t, 
  \omega, p_n)$ for sufficiently large $n$. Otherwise, there must exist an $i_0$ and a subsequence
  $\{d_{n_k}^{i_0}: k\ge 1\}$ of $\{d_{n}^{i_0}: n\ge 1\}$ such that $d_{n_k}^{i_0} \notin C^{X}(t, 
  \omega, p_{n_k})$. Let $b_k\in B(t, \omega, p_{n_k})$ and 
  \[
  U(t,\omega, b_k)> U(t,\omega, d_{n_k}^{i_0})
  \] 
  for all $k\ge 1$. Then $\{b_k: k\ge 1\}$ has a subsequence converging to some $b\in B(t, \omega, 
  p)$. By ({A$_3$}), we have 
  \[
  U(t,\omega, b)\ge U(t,\omega,d^{i_0})> U(t,\omega, d),
  \]
  which contradicts with $d\in C^{X}(t, \omega, p)$. It follows from the previous claim that for each 
  $i$, $\{{\rm dist}(d^i, C^X(t,\omega, p_n)): n\ge 1\}$ converges to $0$. Since $\{d^i: i\ge 1\}$ 
  converges to $d$, one concludes that $\{{\rm dist}(d, C^X(t,\omega, p_n)): n \ge 1\}$ converges to 
  $0$. This means that $d \in {\rm Li}\ C^{X}(t, \omega, p_n)$.
  \end{proof}
  
  To conclude this section, we introduce two more correspondences associated with an economy 
  $\mathscr E$ with asymmetric information. For each $\omega \in \Omega$, let ${\mathscr E}(\omega)$ 
  denote the complete information economy, given by
  \[
  {\mathscr E} (\omega) = \left\{(T, \Sigma, \mu); \mathbb R^\ell_+;
  (U(t,\omega,\cdot), a(t, \omega))_{t\in T}\right\}.
  \]
  The \emph{core} of $\mathscr E(\omega)$ is denoted by ${\bf C}(\mathscr E(\omega))$. The set of 
  all Walrasian equilibria and all Walrasian equilibrium allocations of $\mathscr E(\omega)$ are 
  denoted by ${\rm WE}(\mathscr E(\omega))$ and ${\rm WA}(\mathscr E(\omega))$, respectively. Then, 
  ${\bf C}: \omega \mapsto {\bf C}(\mathscr E(\omega))$ and ${\bf WE}: \omega \mapsto {\rm WE}
  (\mathscr E(\omega))$ define two correspondences. 

  \section{The Ex-post Core and Rational Expectations\\ 
  Equilibrium Allocations} \label{sec:ree_vs_epc}
  
  In this section, we discuss the existence of an ex-post core allocation in our model and also
  the relationship between the ex-post core and the set of (Bayesian or maximin) rational
  expectations equilibrium allocations.
  
  \begin{definition}(\cite{Einy-Moreno-Shitovitz:00b})
  Let $f$ be an allocation in an economy $\mathscr E$, let $S \in \Sigma$ be a coalition. We say 
  that $f$ is \emph{ex-post blocked by $S$} if there exist a state of nature $\omega_0 \in \Omega$ 
  and an assignment $g$ such that
  \begin{itemize} 
  \item[(i)] $\displaystyle \int_S g(\cdot, \omega_0)d\mu=\int_S a(\cdot,\omega_0)d\mu$, and
  \item[(ii)] $U(t,\omega_0, g(t,\omega_0))> U(t,\omega_0, f(t,\omega_0))$ $\mu$-a.e. on $S$.
  \end{itemize}
  In addition, an allocation $f$ is called an \emph{ex-post core allocation} if it cannot be 
  ex-post blocked by any coalition. The \emph{ex-post core} of $\mathscr E$, denoted by 
  ${\bf C}({\mathscr E})$, is the set of all the ex-post core allocations of $\mathscr E$.
  \end{definition}
  
  The main result of this section is the following theorem on the ex-post core.
   
  \begin{theorem}\label{thm:mainexpost}
  Suppose that an economy $\mathscr E$ satisfies {\rm (A$_1$)}-{\rm ({A}$_4$)}. Then the 
  ex-post core of $\mathscr E$ is not empty. Moreover, 
  \[
  {\bf C} ({\mathscr E}) = \left\{ f: f \mbox{ is an allocation and } f(\omega, \cdot) \in 
  {\bf C}(\omega) \mbox{ for all } {\omega \in \Omega} \right\}.
  \]
  \end{theorem}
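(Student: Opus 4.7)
My plan splits Theorem~\ref{thm:mainexpost} into a representation step and an existence step, the second of which carries the real content.

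For the representation equality I would prove both inclusions by contrapositive. First assume $f(\cdot,\omega)\in{\bf C}(\mathscr{E}(\omega))$ for every $\omega$; if $f$ were ex-post blocked by some coalition $S$ at a state $\omega_0$ via an assignment $g$, then conditions (i) and (ii) of the blocking definition applied at $\omega_0$ say exactly that $S$ blocks $f(\cdot,\omega_0)$ via $g(\cdot,\omega_0)$ inside the complete-information economy $\mathscr{E}(\omega_0)$, contradicting the hypothesis. For the converse, if $f\in{\bf C}(\mathscr{E})$ but $f(\cdot,\omega_0)\notin{\bf C}(\mathscr{E}(\omega_0))$ for some $\omega_0$, pick a coalition $S$ and an integrable $g_0:S\to\mathbb{R}^\ell_+$ that blocks $f(\cdot,\omega_0)$ in $\mathscr{E}(\omega_0)$, and extend $g_0$ to $g:T\times\Omega\to\mathbb{R}^\ell_+$ by setting $g(t,\omega)=g_0(t)$ (constant in $\omega$) for $t\in S$ and $g(t,\omega)=a(t,\omega)$ for $t\in T\setminus S$. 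Then $g(t,\cdot)$ is $\mathscr{F}$-measurable for every $t$ and $g(\cdot,\omega)\in L_1(\mu,\mathbb{R}^\ell)$ for every $\omega$, so $g$ is an assignment that ex-post blocks $f$ at $\omega_0$, the desired contradiction.

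For non-emptiness my strategy is to produce an allocation $f$ with $f(\cdot,\omega)\in{\rm WA}(\mathscr{E}(\omega))$ for every $\omega$; strict monotonicity in (A$_3$) delivers the first welfare theorem ${\rm WA}(\mathscr{E}(\omega))\subseteq{\bf C}(\mathscr{E}(\omega))$, so such an $f$ sits in the right-hand set, which by the representation just established equals ${\bf C}(\mathscr{E})$. This splits into two tasks: first, produce an $\mathscr{F}$-measurable price system $\pi:\Omega\to\Delta_+$ with $\pi(\omega)\in{\rm WE}(\mathscr{E}(\omega))$ for every $\omega$; second, with $\pi$ fixed, select a jointly $\Sigma\otimes\mathscr{F}$-measurable $f$ whose slice at each $\omega$ is a market-clearing Walrasian allocation at the price $\pi(\omega)$. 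Once $\pi$ is in hand, the second task is almost automatic: the individual demand correspondence $(t,\omega)\mapsto B(t,\omega,\pi(\omega))\cap C^X(t,\omega,\pi(\omega))$ is weakly $\Sigma\otimes\mathscr{F}$-measurable by Propositions~\ref{prop:budcaratheodory} and~\ref{prop:fmeasure} combined with Lemma~\ref{lem:measurable}, and a standard integration-of-correspondences argument together with Kuratowski--Ryll-Nardzewski, as used in \cite{BCY:14, Bhowmik-Cao:16}, yields a measurable selection $f(\cdot,\omega)$ summing to $\int_T a(\cdot,\omega)\,d\mu$.

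The real obstacle, and precisely the step the authors flag as the crux, is the weak $\mathscr{F}$-measurability of $\omega\mapsto{\rm WE}(\mathscr{E}(\omega))$ required in the first task. For each fixed $\omega$ the existence of a Walrasian equilibrium in the mixed complete-information economy $\mathscr{E}(\omega)$ under (A$_1$)--(A$_4$) is classical. To obtain measurability I would characterise equilibrium prices through aggregate excess demand: $p\in{\rm WE}(\mathscr{E}(\omega))$ iff there is a measurable selection $\xi(t)\in B(t,\omega,p)\cap C^X(t,\omega,p)$ with $\int_T\xi\,d\mu=\int_T a(\cdot,\omega)\,d\mu$. The joint weak measurability of $B$ and $C^X$ in $(t,\omega,p)$, combined with the standard measurability of Aumann-type integrals of correspondences, would then show that the graph $\{(\omega,p):p\in{\rm WE}(\mathscr{E}(\omega))\}$ is $\mathscr{F}\otimes\mathscr{B}(\Delta_+)$-measurable; Lemma~\ref{lem:measurable} then yields weak $\mathscr{F}$-measurability of ${\bf WE}$, after which Kuratowski--Ryll-Nardzewski produces the desired $\pi$. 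Lemma~\ref{lem:CX} will likely be needed here to secure the closedness and upper-hemicontinuity properties that let the excess-demand analysis pass to the limit in price.
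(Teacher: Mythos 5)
Your proposal is correct in outline and shares the paper's skeleton (representation equality plus a measurable selection from the state-wise Walrasian correspondence), but it implements the crucial measurability step by a genuinely different route. The representation step matches the paper's almost verbatim; your blocking assignment $g(t,\omega)=g_0(t)$ for $t\in S$ is even slightly simpler than the paper's, which restricts $g_0$ to $S\times\Omega_0$ for $\Omega_0=\{\omega:\int_S g_0\,d\mu=\int_S a(\cdot,\omega)\,d\mu\}$ --- both work, since ex-post blocking is tested at the single state $\omega_0$. The divergence is in the measurability of the equilibrium correspondence. You propose to work with the finite-dimensional price correspondence $\omega\rightrightarrows\{p\in\Delta_+:p \mbox{ clears markets in }\mathscr E(\omega)\}$, exploiting the fact that under ({A}$_4$) individual demand $B\cap C^X$ is single-valued, so that aggregate excess demand is a Carath\'eodory-type function of $(\omega,p)$ whose zero set gives a measurable graph. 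The paper instead proves (Theorem \ref{thm:W}) weak $\mathscr F$-measurability of the full correspondence ${\bf WE}:\Omega\rightrightarrows L_1(\mu,\mathbb R^\ell)\times\Delta$, via distance functions to the sets of integrable selections $H_p(\omega)=\mathscr S_{C^X(\cdot,\omega,p)}$ over a countable dense grid of rational prices, with Lemma \ref{lem:CX} supplying the lower limit needed to pass from the grid to arbitrary $p$. Your route is more elementary and entirely adequate under the hypotheses of Theorem \ref{thm:mainexpost}, where ({A}$_4$) holds for all agents; what the paper's heavier argument buys is that Theorem \ref{thm:W} is proved under only (${\rm A}'_4$) (quasi-concavity for the atoms alone), a generality the paper reuses. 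Two small points you should tighten if you write this up: (i) when you say Kuratowski--Ryll-Nardzewski yields a market-clearing selection $f(\cdot,\omega)$ at the price $\pi(\omega)$, you must invoke single-valuedness of demand (from ({A}$_4$)) explicitly --- an arbitrary measurable selection of a multi-valued demand correspondence need not clear markets at each $\omega$; and (ii) your price correspondence must be shown nonempty \emph{closed}-valued in a Polish space before Kuratowski--Ryll-Nardzewski applies, which is where the continuity-in-$p$ of aggregate demand (your appeal to Lemma \ref{lem:CX}) genuinely enters.
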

  
  To provide a proof of Theorem \ref{thm:mainexpost}, we need some preparation. First of all, 
  the following result, which is a special case of the Kuratowski-Ryll-Nardzewski measurable 
  selection theorem (refer to \cite[18.13]{Aliprantis-Border:06}), will be needed.
  
  \begin{lemma} \label{lem:Kuratowski}
  Let $F:T\rightrightarrows \mathbb R^\ell$ be a weakly $\Sigma$-measurable 
  correspondence such that $F(t)$ is non-empty and closed for all $t\in T$. Then $F$ 
  admits a $\Sigma$-measurable selection.
  \end{lemma}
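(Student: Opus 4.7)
The plan is to prove this special case of the Kuratowski--Ryll--Nardzewski theorem by the standard successive-approximation construction, exploiting that $\mathbb R^\ell$ is a separable metric space and that weak $\Sigma$-measurability makes preimages of open balls under the distance map measurable (Lemma~\ref{lem:measurable}(iii)). Fix a countable dense subset $\{x_n : n \ge 1\}$ of $\mathbb R^\ell$, and build a sequence of $\Sigma$-measurable functions $f_n : T \to \mathbb R^\ell$ whose values lie in $\{x_n\}$, satisfying the two invariants
\begin{equation*}
\mathrm{dist}(f_n(t), F(t)) < 2^{-n} \quad\text{and}\quad \|f_n(t) - f_{n-1}(t)\| < 2^{-(n-2)}
\end{equation*}
for all $t \in T$ and all $n \ge 1$. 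Once such a sequence is produced, uniform Cauchyness gives a pointwise limit $f(t) := \lim_{n \to \infty} f_n(t)$, which is $\Sigma$-measurable as a pointwise limit of measurable functions, and $f(t) \in F(t)$ because $\mathrm{dist}(f_n(t), F(t)) \to 0$ and $F(t)$ is closed.

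To begin the induction I would set $f_0 \equiv x_1$ (constant, hence measurable); the first invariant for $n=0$ is vacuous, or alternatively start from a large enough radius. Assume $f_{n-1}$ has been constructed. For each index $k \ge 1$, define
\begin{equation*}
E_k := \bigl\{ t \in T : F(t) \cap B(x_k, 2^{-n}) \ne \emptyset \bigr\} \cap \bigl\{ t \in T : \|x_k - f_{n-1}(t)\| < 2^{-(n-2)} \bigr\}.
\end{equation*}
The first set is in $\Sigma$ by weak $\Sigma$-measurability of $F$ (applied to the open ball $B(x_k, 2^{-n})$), and the second is measurable because $f_{n-1}$ is $\Sigma$-measurable. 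The key verification is that $\bigcup_{k \ge 1} E_k = T$: given $t \in T$, pick any $y \in F(t)$ with $\|y - f_{n-1}(t)\| < 2^{-(n-1)}$ by the inductive hypothesis, and then choose $x_k$ with $\|x_k - y\| < 2^{-n}$ by density; one checks $x_k \in B(y, 2^{-n}) \subseteq B(x_k, 2^{-n})$-witness and $\|x_k - f_{n-1}(t)\| < 2^{-n} + 2^{-(n-1)} < 2^{-(n-2)}$.

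Next, disjointify by setting $\widetilde E_k := E_k \setminus \bigcup_{j<k} E_j \in \Sigma$ and defining $f_n(t) := x_k$ for $t \in \widetilde E_k$. This $f_n$ takes countably many values on measurable sets, hence is $\Sigma$-measurable, and by construction satisfies both invariants. Taking the pointwise limit $f := \lim_n f_n$ produces the required $\Sigma$-measurable selection of $F$. The only delicate step, and the one I would spend care on, is the density-plus-induction argument showing $\bigcup_k E_k = T$; everything else is routine measurability bookkeeping.
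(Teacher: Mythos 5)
The paper does not prove this lemma at all: it is stated as a special case of the Kuratowski--Ryll--Nardzewski selection theorem and dispatched by citation to Aliprantis--Border. What you have written is, in effect, the standard textbook proof of that theorem specialized to $\mathbb{R}^\ell$: cover $T$ by the measurable sets $F^{-1}(B(x_k,r))$ using a countable dense set and weak measurability, disjointify, and iterate to get a uniformly Cauchy sequence of countably-valued measurable approximate selections whose limit lands in $F(t)$ by closedness. The inductive step is sound --- the covering claim $\bigcup_k E_k = T$ follows exactly as you say from the triangle inequality $2^{-n}+2^{-(n-1)}<2^{-(n-2)}$, the measurability of each $E_k$ is exactly where weak $\Sigma$-measurability enters, and the limit argument is routine. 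So your proposal is correct in substance and is more self-contained than the paper, at the cost of reproving a classical result the authors were content to quote.

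One point needs repair: the base case. With $f_0\equiv x_1$ constant, the first invariant $\mathrm{dist}(f_0(t),F(t))<2^0$ is not ``vacuous'' --- it is precisely the inductive hypothesis you invoke at $n=1$ when you ``pick $y\in F(t)$ with $\|y-f_0(t)\|<2^{-(1-1)}$'', and it simply fails whenever $F(t)$ lies at distance at least $1$ from $x_1$. The standard fix is to run the $n=1$ step without the second (proximity-to-$f_0$) condition: the sets $\{t: F(t)\cap B(x_k,2^{-1})\neq\emptyset\}$ already cover $T$ because each $F(t)$ is non-empty and $\{x_k\}$ is dense, and the resulting $f_1$ satisfies $\mathrm{dist}(f_1(t),F(t))<2^{-1}$, which is all the induction needs from then on. The Cauchy estimate only requires the second invariant for $n\ge 2$, so summability of the increments is unaffected. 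With that one-line adjustment your argument is complete.
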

  
  Secondly, the following result on the weak measurability of $\bf WE$ is also needed for
  the proof of Theorem \ref{thm:mainexpost}.

  \begin{theorem}\label{thm:W}
  Assume that an economy $\mathscr E$ satisfies {\rm ({A}$_1)$-({A}$_3)$} and 
  {\rm (A$^\prime_4)$}. Then $\bf WE$ is weakly $\mathscr F$-measurable. 
  \end{theorem}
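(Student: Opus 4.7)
My plan is to prove weak $\mathscr{F}$-measurability of $\mathbf{WE}$ by showing its graph lies in the appropriate product $\sigma$-algebra and then applying Lemma~\ref{lem:measurable}. The idea is to characterize equilibrium prices as zeros of the aggregate (mean) demand correspondence and to lift measurability from the individual level through a parametric Aumann integral.

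Step one is to restrict to $\Delta_+$: by strict monotonicity (A$_3$), any Walrasian equilibrium price must lie in $\Delta_+$, so equilibria outside $\Omega\times\Delta_+$ can be ignored. Define the individual demand correspondence
\[
D(t,\omega,p):=B(t,\omega,p)\cap C^X(t,\omega,p).
\]
Intersecting the measurable graphs furnished by Propositions~\ref{prop:budcaratheodory} and~\ref{prop:fmeasure}, $D$ is weakly $\Sigma\otimes\mathscr{F}\otimes\mathscr{B}(\Delta_+)$-measurable; it is also non-empty, compact- and convex-valued (by (A$_3$) and (A$'_4$)) and uniformly dominated by the Bochner-integrable envelope $b(t,\omega,p)$ from Section~\ref{sec:associated}.

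Step two is the core measurability transfer. A price $p\in\Delta_+$ is Walrasian for $\mathscr{E}(\omega)$ if and only if
\[
\int_T a(\cdot,\omega)\,d\mu \in \bar D(\omega,p):=\int_T D(t,\omega,p)\,d\mu,
\]
the right-hand side being the Aumann integral. Starting from a Castaing representation of $D$ (obtained via Lemma~\ref{lem:Kuratowski} on slices plus Proposition~\ref{prop:fmeasure}) and the integrable envelope $b$, a standard parametric-integration theorem (in the style of Hildenbrand, or Castaing--Valadier) yields that $\bar D$ has $\mathscr{F}\otimes\mathscr{B}(\Delta_+)$-measurable graph. Combined with $\mathscr{F}$-measurability of $\omega\mapsto\int_T a(\cdot,\omega)\,d\mu$ (Fubini from (A$_1$)), the equilibrium-price correspondence has measurable graph, and the allocation piece of any equilibrium is recovered by Lemma~\ref{lem:Kuratowski} applied to $D(\cdot,\omega,p)$ after slicing, which assembles into weak $\mathscr{F}$-measurability of the full $\mathbf{WE}$.

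The main obstacle is the parametric measurability of $\bar D$, complicated by the fact that $(T,\Sigma,\mu)$ has both an atomless part $T_0$ and at most countably many atoms $T_1=\bigcup_n A_n$. Lyapunov-type convexification handles $T_0$ cleanly; on $T_1$, however, convexification fails, and one must rely on the quasi-concavity assumption (A$'_4$) to keep $D$ restricted to each atom $A_n$ convex-valued, so that the countable atomic sum preserves both convexity and joint $(\omega,p)$-measurability. Assembling the $T_0$- and $T_1$-contributions into one measurable parametric correspondence, and verifying closed-graph stability under the countable sum, is where the bulk of the work lies.
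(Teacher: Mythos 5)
Your route is genuinely different from the paper's: you characterize equilibrium prices through the aggregate (Aumann) demand condition $\int_T a(\cdot,\omega)\,d\mu \in \int_T D(t,\omega,p)\,d\mu$ and push measurability through a parametric integration theorem, whereas the paper never aggregates demand at all. It works directly in $L_1(\mu,\mathbb R^\ell)\times\Delta$ with the selection sets $H_p(\omega)=\mathscr S_{C^X(\cdot,\omega,p)}$ for a countable dense set $R$ of strictly positive rational prices, proves the identity ${\rm dist}(g,H_p(\omega))=\int_T{\rm dist}(g(t),C^X(t,\omega,p))\,d\mu$ (Claims 1--2) to get weak measurability of each $H_p$, and then uses the lower semicontinuity of $C^X$ in $p$ (Lemma~\ref{lem:CX}) to show that ${\bf WE}(\omega)$ is exactly the trace of the closed union $\overline{\bigcup_{p\in R}H_p(\omega)\times\{p\}}$ on the feasible set. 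Your price-correspondence step is completable with Castaing--Valadier/Hildenbrand machinery (note, though, that your parenthetical claim that $D=B\cap C^X$ is convex-valued ``by (A$_3$) and (A$'_4$)'' is false on $T_0$, where no concavity is assumed; you only recover convexity of the integral over $T_0$ via Lyapunov, as you later acknowledge).

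The genuine gap is in your final assembly. Theorem~\ref{thm:W} asserts weak $\mathscr F$-measurability of the full correspondence $\omega\rightrightarrows{\bf WE}(\omega)\subseteq L_1(\mu,\mathbb R^\ell)\times\Delta$, i.e.\ measurability of the set of \emph{all} equilibrium allocation--price pairs, not merely of the equilibrium-price projection together with one measurable selection. Your plan delivers (at best) a measurable equilibrium-price correspondence and, via Lemma~\ref{lem:Kuratowski} applied slice by slice, one selection of $D(\cdot,\omega,p)$ for each fixed $(\omega,p)$; that lemma says nothing about how the sets $\mathscr S_{D(\cdot,\omega,p)}\subseteq L_1(\mu,\mathbb R^\ell)$ vary jointly measurably in $(\omega,p)$, which is precisely what is needed to conclude that ${\rm Gr}_{\bf WE}$ is measurable. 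This joint measurability of the selection sets is the heart of the paper's argument (Claims 1--2 plus the rational-price approximation in Claim 3, which in turn requires Lemma~\ref{lem:CX} to pass from the countable grid $R$ to arbitrary $p\in\Delta_+$), and your proposal contains no substitute for it. A secondary, smaller omission: you assert that equilibria with $p\notin\Delta_+$ ``can be ignored'' by (A$_3$); that is true, but it requires the short argument the paper gives in Claim 3 (a boundary price leaves every budget set unbounded in some coordinate direction, so no maximizer exists), and in your framework you must also verify that discarding the boundary does not affect measurability of preimages of open sets meeting $\Delta\setminus\Delta_+$. If your goal were only the application in Theorem~\ref{thm:mainexpost}, a single measurable selection would suffice and your outline could be salvaged; but as a proof of the stated theorem it does not close.
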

  
  \begin{proof}
  Note that under the given assumptions, ${\bf WE}(\omega)\neq \emptyset$ for all $\omega\in \Omega$. 
  Consider the correspondences $F:(\Omega,\mathscr F, \mathbb P)\rightrightarrows L_1(\mu, \mathbb 
  R^\ell)$, defined by
  \[
  F(\omega)= \left\{f\in L_1(\mu, \mathbb R^\ell): \int_T fd\mu-\int_T a(\cdot,\omega)d\mu= 0\right\}
  \] 
  and $G:(\Omega,\mathscr F, \mathbb P)\rightrightarrows L_1(\mu, \mathbb R^\ell) \times \Delta$, 
  defined by $G(\omega)=F(\omega)\times \Delta$. First of all, we claim that $F$ has a measurable 
  graph, and thus $G$ also has a measurable graph. To see this, define a function $\varphi:(\Omega,
  \mathscr F, \mathbb P) \times L_1(\mu, \mathbb R^\ell)\to \mathbb R^\ell$ by
  \[
  \varphi(\omega, f)=\int_T fd\mu- \int_T a(\cdot,\omega)d\mu.
  \] 
  Note that for every $f\in L_1(\mu, \mathbb R^\ell)$, $\varphi(\cdot, f)$ is $\mathscr F$-measurable 
  and for every $\omega\in \Omega$, $\varphi(\omega, \cdot)$ is norm-continuous. Thus, $\varphi$
  is ${\mathscr F}\otimes {\mathscr B}(L_1(\mu, \mathbb R^\ell))$-measurable. The conclusion follows 
  from the fact ${\rm Gr}_F= \varphi^{-1}(0)$.
  
  \medskip
  Let $\mathbb Q^n\cap \Delta_+ = R$, where ${\mathbb Q}^n$ is the set of vectors in $\mathbb R^\ell$
  with rational components. Note that $R$ is countable and dense in $\Delta$. For each $p \in R$, 
  define a correspondence $H_{p}:(\Omega,\mathscr F, \mathbb P)\rightrightarrows L_1(\mu, \mathbb 
  R^\ell)$ by $H_{p}(\omega)= {\mathscr S}_{C^X(\cdot, \omega, p)}$, where ${\mathscr S}_{C^X
  (\cdot,\omega,p)}$ is the set of integrable selections of $C^X(\cdot,\omega,p)$. Fix a $p\in 
  R$ and define a function $\zeta:L_1(\mu, \mathbb R^\ell)\times (\Omega,\mathscr F,
  \mathbb P)\to \mathbb R_+$ by $\zeta(g,\omega)= {\rm dist}(g, H_{p}(\omega))$. Furthermore, 
  for each $g\in L_1(\mu,\mathbb R^\ell)$, let the function $\xi^g: (T,\Sigma,\mu) \times 
  (\Omega, \mathscr F, \mathbb P) \to \mathbb R_+$ be defined by
  \[
  \xi^g (t, \omega)= {\rm dist} \left(g(t), C^X(t, \omega, p)\right).
  \] 
  {\bf Claim 1.} \emph{For each simple function $g \in L_1(\mu, \mathbb R^\ell)$, $\displaystyle 
  \zeta(g,\omega) = \int_T\xi^g(\cdot, \omega)d\mu$ holds for all $\omega \in \Omega$.}
  
  \begin{proof}[Proof of Claim 1]
  Let $g \in L_1(\mu,\mathbb R^\ell)$ be a given simple measurable function. 
  As $g$ is a step-function with finitely many values, it follows from Lemma 
  \ref{lem:measurable} and Proposition \ref{prop:fmeasure} that $\xi^g$ is 
  $\Sigma\otimes {\mathscr F}$-measurable. In addition, since 
  \[
  \xi^g (t, \omega)\le \|g(t)-b(t,\omega,p)\|
  \]
  for all $(t,\omega)\in T\times \Omega$, $\xi^g(\cdot, \omega)$ is also integrably 
  bounded and thus $\xi^g(\cdot, \omega) \in L_1(\mu, \mathbb R^\ell)$ for all 
  $\omega \in \Omega$. It is easy to check that $\int_T \xi^g(\cdot, \omega)d\mu 
  \le \zeta(g, \omega)$ for all $\omega \in \Omega$. Suppose $\int_T \xi^g(\cdot, 
  \omega_0) d\mu <\zeta(g,\omega_0)$ holds for some $\omega_0\in \Omega$. Then, 
  there is an $\varepsilon> 0$ such that 
  \[
  \int_T \xi^g(\cdot,\omega_0)d\mu+ \varepsilon\mu(T)<\zeta(g,\omega_0).
  \]
  Next, we define $A:T\rightrightarrows \mathbb R^\ell$ and 
  $\alpha:T \times \mathbb R^\ell\to \mathbb R$ by
  \[
  A(t)= \left\{y\in C^X(t,\omega_0, p):\|g(t)- y\| \le \xi^g(t,\omega_0)+ 
  \varepsilon\right\}
  \]
  and
  \[
  \alpha(t, y)= \|g(t)- y\|-\xi^g(t,\omega_0).
  \]
  As done in the above, it can be shown that $\alpha$ is $\Sigma \otimes 
  {\mathscr F}$-measurable and thus
  \[
  {\rm Gr}_{A}= \left\{(t,y)\in T\times \mathbb R^\ell: \alpha(t,y)\le 
  \varepsilon\right\}\cap {\rm Gr}_{C^X(\cdot,\omega_0,p)}
  \]
  is measurable. By Lemma \ref{lem:Kuratowski}, $A$ has a measurable selection 
  $h: T\to \mathbb R^\ell$ satisfying
  \[
  \|g- h\|_{L_1}\le \int_T \xi^g(\cdot, \omega_0)d\mu+ \varepsilon\mu(T).
  \]
  As $h \in H_p(\omega_0)$, we have
  \[
  \zeta(g, \omega_0) \le \int_T \xi^g(\cdot, \omega_0)d\mu+ \varepsilon\mu(T),
  \]
  which is a contradiction.
  \end{proof}
  
  \medskip
  \noindent
  {\bf Claim 2.} \emph{For each function $g \in L_1(\mu, \mathbb R^\ell)$, $\displaystyle 
  \zeta(g,\omega) = \int_T \xi^g(\cdot, \omega)d\mu$ holds for all $\omega \in \Omega$, and 
  thus $H_{p}$ is weakly $\mathscr F$-measurable.}
  
  \begin{proof}[Proof of Claim 2]
  Let $\{g_n:n\ge 1\}$ be a sequence of simple measurable functions converging to $g$ in $L_1
  (\mu,\mathbb R^\ell)$. By \cite[Theorem 13.6]{Aliprantis-Border:06}, there is a subsequence 
  $\{g_{n_k}:k\ge 1\}$ of $\{g_n:n\ge 1\}$ and a function $h\in L_1(\mu,\mathbb R^\ell)$ such 
  that $|g_{n_k}|\le h$ for all $k\ge 1$ and $\{g_{n_k}:k\ge 1\}$ converges pointwise to $g$. 
  Thus, $\{\xi^{g_{n_k}}(t,\omega): k\ge 1\}$ converges to $\xi^g(t,\omega)$ for all $(t,\omega)
  \in T\times \Omega$. As 
  \[
  \xi^{g_{n_k}}(t,\omega)\le \|h(t)+b(t,\omega,p)\|,
  \]
  we have $\{\xi^{g_{n_k}}(\cdot,\omega):k\ge 1\}$ is dominated by the integrable function 
  $h+b(\cdot,\omega,p)$. Hence, by the Lebesgue dominated convergence theorem, we have 
  \[
  \lim_{k\to \infty}\int_T \xi^{g_{n_k}}(\cdot,\omega) d\mu=\int_T \xi^{g}(\cdot,\omega)d\mu 
  \]
  for all $\omega\in \Omega$. On the other hand, $\{\zeta(g_{n_k},\omega):k\ge 1\}$ converges to 
  $\zeta(g,\omega)$ for every $\omega \in \Omega$. Thus, we have
  \[
  \zeta(g,\omega) = \int_T \xi^{g}(\cdot,\omega)d\mu
  \]
  for all $\omega\in \Omega$. Moreover, since each $\zeta(g_{n_k},\cdot)$ is $\mathscr F$-measurable, 
  we have $\zeta(g,\cdot)$ is $\mathscr F$-measurable. Thus, $H_{p}$ is weakly $\mathscr F$-measurable.  
  \end{proof}
  
  Define a correspondence $H:(\Omega,\mathscr F, \mathbb P) \rightrightarrows L_1(\mu, \mathbb R^\ell)
  \times \Delta$ by letting
  \[
  H(\omega)=\overline{\bigcup\{H_{p}(\omega)\times \{p\}: p\in R\}}
  \]
  for all $\omega\in \Omega$, where the closure operation is taken in the product topology on $L_1
  (\mu,\mathbb R^\ell) \times \Delta$ induced by the norm of $L_1(\mu, \mathbb R^\ell)$ and the 
  norm of $\mathbb R^\ell$. 
  	
  \bigskip
  \noindent
  {\bf Claim 3.} \emph{${\bf WE}(\omega)= H(\omega)\cap G(\omega)$ for all $\omega\in \Omega$.}
  	
  \begin{proof}[Proof of Claim 3]
  Fix some $\omega\in \Omega$. Let $(f, p)\in {\bf WE}(\omega)$. Clearly, $(f, p)\in  G(\omega)$. By 
  ({A}$_3)$, we have $p\in \Delta_+$. It follows that $f(t)\in C^X(t,\omega,p)$ $\mu$-a.e. on $T$. 
  Now, suppose that $\{p_n: n\ge 1\}\subseteq R$ is a sequence converging to $p$. By Claim 2, 
  \[
  {\rm dist}(f, H_{p_n}(\omega))= \int_T \eta_n d\mu,
  \] 
  where $\eta_n:T\to \mathbb R_+$ is defined by $\eta_n(t)= {\rm dist}(f(t), C^X(t,\omega, p_n))$. By 
  Lemma \ref{lem:CX}, 
  \[
  C^X(t,\omega,p) \subseteq {\rm Li}\ C^{X}(t, \omega, p_n)
  \] 
  for all $t\in T$. Hence,  for each $t\in T$ and each $n\ge 1$, we can choose some $f_n(t)\in 
  C^X(t,\omega,p_n)$ such that $f_n(t) \to f(t)$, $\mu$-a.e on $T$. It follows that $\eta_n (t) \to 0$, 
  $\mu$-a.e on $T$. Define
  \[
  \beta= \inf\left(\{\delta(p_n): n\ge 1\}\cup \{\delta(p)\}\right).
  \]
  Then $\beta> 0$ and for each $t\in T$, let 
  \[
  d(t)= \frac{1}{\beta}\sum_{h= 1}^\ell a^h (t,\omega).
  \]
  Note that 
  \[
  \eta_n(t) \le \|f_n(t) - f(t) \| \le 2\|d(t){\bf 1}\|,
  \] 
  $\mu$-a.e on $T$ for all $n\ge 1$. By the Lebesgue dominated convergence theorem again, we have 
  \[
  {\rm dist}(f, H_{p_n}(\omega)) = \int_T \eta_n d\mu \to 0.
  \]
  It follows that $(f, p)\in H(\omega)$.
  		
  \medskip
  Let $(f, p)\in H(\omega)\cap G(\omega)$ for an arbitrarily fixed $\omega \in \Omega$. Analogous to 
  the proof of Claim 2, we can find a sequence $\{r_n: n\ge 1\} \subseteq R$ and $f_n\in H_{r_n}
  (\omega)$ such that $(f_n, r_n)\to (f, p)$ in $L_1(\mu,\mathbb R^\ell)\times \Delta$, as $n\to 
  \infty$ and $\{f_n:n\ge 1\}$ pointwise converges to $f$. Since $p\in \Delta$, by (A$_1$), we must 
  have $\langle p, \int_T a(\cdot,\omega)d\mu\rangle> 0$. Put,
  \[
  S=\left\{t\in T: \langle p, a(t,\omega)\rangle> 0\right\}.
  \]
  Definitely, $S\in \Sigma$ and $\mu(S)> 0$. Define 
  \[
  A_n=\{t\in S: f_n(t)\notin C^X(t,\omega,r_n)\} \mbox{ and } A=\bigcup\{A_n:n\ge 1\}.
  \]
  Since $\mu(A_n)=0$ for all $n\ge 1$, one must have $\mu(A)=0$. Choose a $t\in S\setminus A$. If 
  $f(t)\notin C(t,\omega,p)$ for some $t\in S\setminus A$, by ({A}$_3)$, there must exist an element 
  $y\in \mathbb R^\ell_{+}$ such that $\langle p, y\rangle< \langle p, a(t,\omega)\rangle$ and 
  $U(t,\omega,y)> U(t,\omega,f(t))$. As a result, $\langle r_n, y\rangle< \langle r_n, a(t,\omega)
  \rangle$ and $U(t,\omega,y)> U(t,\omega,f_n(t))$ for sufficiently large $n$, which is a contradiction. 
  Thus, $f(t)\in C(t,\omega,p)$ for all $t\in S\setminus A$. Since $U(t, \omega, \cdot)$ is strictly 
  increasing, $\langle p, f(t)\rangle\ge \langle p, a(t,\omega)\rangle$ $\mu$-a.e. on $S$. Moreover, 
  $\langle p, f(t)\rangle\ge 0=\langle p, a(t,\omega)\rangle$ for all $t\in T\setminus S$. Hence,
  $\langle p, f(t)\rangle\ge \langle p, a(t,\omega)\rangle$ $\mu$-a.e. on $T$, which together with
  the feasibility of $f$ implies that $f(t)\in B(t,\omega,p)$ $\mu$-a.e. on $T$. If $\mu(T
  \setminus S)=0$, then $(f, p)\in {\bf WE}(\omega)$. Otherwise, we first claim that $p\in \Delta_+$. 
  If not, there is some $z> 0$ such that $\langle p, z\rangle=0$. Consequently, $f(t)+z\in B(t,\omega, 
  p)$ and 
  \[
  U(t,\omega, f(t)+z)> U(t,\omega,f(t))
  \] 
  for all $t\in S\setminus A$, which is a contradiction. So, $B(t,\omega,p)= \{0\}$ and $f(t)=0$ 
  for $\mu$-a.e. on $T\setminus S$. Thus, $(f, p) \in {\bf WE}(\omega)$. 
  \end{proof}
  	
  By Claim 2, ${\rm Gr}_{\bf WE}= {\rm Gr}_{H}\cap {\rm Gr}_{G}$. Since both ${\rm Gr}_{H}$ and ${\rm 
  Gr}_{G}$ are measurable, ${\rm Gr}_{\bf WE}$ is measurable. Hence, $\bf WE$ is weakly $\mathscr 
  F$-measurable.       
  \end{proof}
  
  Now, we are ready to provide a proof of  Theorem \ref{thm:mainexpost}, as promised previously.

  \begin{proof}[Proof of Theorem \ref{thm:mainexpost}]
  First of all, for the sake of convenience, we put
  \[
  X= \left\{ f: f \mbox{ is an allocation and } f(\cdot, \omega) \in {\bf C}(\mathscr E(\omega)) 
  \mbox{ for all } {\omega \in \Omega} \right\}.
  \]
  It is easy to see that ${\bf WE}$ is non-empty closed-valued. Then, following Theorem \ref{thm:W}, 
  ${\bf WE}$ is weakly $\Sigma$-measurable. By Lemma \ref{lem:Kuratowski}, $\bf WE$ has a measurable 
  selection $\omega \mapsto (f(\cdot,\omega), \pi(\omega))$. Note that $\pi(\omega)\in \Delta_+$ for 
  all $\omega\in \Omega$. Under assumption ({A}$_4)$, $B(t,\omega,\pi(\omega))\cap C^X(t,\omega,
  \pi(\omega))$ is singleton for all $(t,\omega) \in T\times \Omega$. Let $g:T\times \Omega \to 
  \mathbb R^\ell_+$ be the function defined by 
  \[
  g(t,\omega)= B(t,\omega,\pi(\omega))\cap C^X(t,\omega,\pi(\omega))
  \] 
  for all $(t,\omega)\in T\times \Omega$. By Proposition \ref{prop:budcaratheodory} and Proposition 
  \ref{prop:fmeasure}, $g$ is $\Sigma\otimes \mathscr F$-measurable. Hence, $g(t,\cdot)$ is $\mathscr 
  F$-measurable for all $t\in T$. For all $\omega\in \Omega$, $g(\cdot,\omega)=f(\cdot,\omega)$ 
  $\mu$-a.e., which implies that $(g(\cdot,\omega), \pi(\omega))\in {\bf WE}(\omega)$ for 
  all $\omega\in \Omega$. It follows that $g(\cdot,\omega)\in {\bf C}(\mathscr E(\omega))$ for all 
  $\omega\in \Omega$. Hence, $X \neq \emptyset$. Then, it is straightforward to see that 
  $X \subseteq {\bf C}({\mathscr E})$.  
  
  \medskip
  To show that ${\bf C}(\mathscr E) \subseteq X$, we suppose that there exists an $f\in 
  {\bf C}(\mathscr E) \setminus X$. Then there exists a state $\omega_0 \in \Omega$ such that 
  $f(\cdot, \omega_0) \not \in {\bf C}({\mathscr E}(\omega_0))$. This means that $f$ is blocked
  in ${\mathscr E}(\omega_0)$ by some coalition $S$. Therefore, there exists an assignment 
  $g: T \to \mathbb R^\ell_+$ in ${\mathscr E}(\omega_0)$ such that
  \begin{itemize} 
  \item[(i)] $\displaystyle \int_S g d\mu=\int_S a(\cdot,\omega_0)d\mu$, and
  \item[(ii)] $U(t,\omega_0, g(t))> U(t,\omega_0, f(t,\omega_0))$ $\mu$-a.e. on $S$.
  \end{itemize}
  Define
  \[
  \Omega_0 = \left\{ \omega\in \Omega: \int_S g d\mu = \int_S a(\cdot,
  \omega) d\mu \right\}.
  \]
  Obviously, $\omega_0 \in \Omega_0$ and $\Omega_0 \in \mathscr F$. Define a 
  function $h:T\times \Omega \to \mathbb R^\ell_+$ by
  \[
  h(t, \omega)= \left\{
  \begin{array}{ll}
  g(t), & \mbox{if $(t, \omega)\in S \times \Omega_0$;}\\[0.3em]
  a(t,\omega), & \mbox{otherwise}.
  \end{array}
  \right.
  \]
  Then, it can be readily checked that $h$ is an assignment in $\mathscr E$ such that
  \begin{itemize} 
  \item[(iii)] $\displaystyle \int_S h(\cdot, \omega_0)d\mu=\int_S a(\cdot,\omega_0)d\mu$, and
  \item[(iv)] $U(t,\omega_0, h(t,\omega_0))> U(t,\omega_0, f(t,\omega_0))$ $\mu$-a.e. on $S$.
  \end{itemize}
  This means that $f$ is ex-post blocked by $S$ (via an assignment $h$ at the state $\omega_0$), 
  which contradicts with the fact of $f\in {\bf C}(\mathscr E)$. 
  \end{proof}
  
  Next, we discuss the consequences of Theorems \ref{thm:mainexpost} and \ref{thm:W}. We 
  need to introduce two competitive equilibrium concepts in the economy model $\mathscr E$ discussed  
  in Subsection \ref{subsec: model}: maximin rational expectations equilibrium and Bayesian rational 
  expectations equilibrium. Given an agent $t \in T$, a state of nature $\omega\in \Omega$ and a 
  price system $\pi: \Omega \to \Delta$, let $B^{REE}(t,\omega,\pi)$ be defined by
  \[
  B^{REE}(t, \omega, \pi)= \left\{x\in (\mathbb R^\ell_+)^\Omega:
  x(\omega^\prime)\in B(t, \omega^\prime, \pi(\omega^\prime))
  \mbox{ for all } \omega^\prime \in \mathscr G_t(\omega)\right\}.
  \]
  The \emph{maximin utility} of each agent $t\in T$ with respect to $\mathscr G_t$ at $x: \Omega
  \to \mathbb R^\ell_+$ in state $\omega \in \Omega$, denoted by $\b{\it U}^{REE}(t, \omega, x)$, is 
  defined by
  \[
  \b{\it U}^{REE}(t,\omega, x)= \inf\left\{ U(t, \omega', x(\omega')):
  \omega'\in \mathscr G_t (\omega)\right\}.
  \]
  
  \begin{definition}[\cite{de Castro-Pesce-Yannelis:11}] \label{def:maximinree}
  Given an allocation $f$ and a price system $\pi$ in an economy $\mathscr E$, the pair $(f, \pi)$ 
  is called a \emph{maximin rational expectations equilibrium} (abbreviated as maximin REE) of 
  $\mathscr E$ if $f(t, \omega) \in B(t, \omega, \pi(\omega))$ and $f(t, \cdot)$ maximizes 
  $\b{\it U}^{REE}(t, \omega, \cdot)$ on $B^{REE}(t, \omega, \pi)$ for all $(t, \omega) \in T 
  \times \Omega$. In this case, $f$ is called a \emph{maximin rational expectations allocation}, 
  and the set of such allocations is denoted by $MREE(\mathscr E)$.
  \end{definition}
  
  Define $L_t^{REE}$ by
  \[
  L_t^{REE} = \{x \in (\mathbb R_+^\ell)^\Omega: x \mbox{ is }{\mathscr G_t}\mbox{-measurable}\}.
  \]
  For a given $x\in L_t^{REE}$, recall that the \emph{Bayesian expected utility} of agent $t$ 
  with respect to $\mathscr G_t$ at $x$ is given by ${\mathbb E}_t\left[ U(t, \cdot,x)|{\mathscr 
  G}_t \right]$. 
  
  \begin{definition}[\cite{Allen:81, Radner:79}] \label{def:bayesianree}
  Given an allocation $f$ and a price system $\pi$ in an economy $\mathscr E$, the pair $(f, \pi)$ 
  is called a \emph{Bayesian rational expectations equilibrium} (abbreviated as Bayesian REE) of 
  $\mathscr E$ if
  \begin{enumerate}
  \item[(i)] for each $t\in T$, $f(t,\cdot)$ is $\mathscr G_t$-measurable;
  \item[(ii)] for all $(t,\omega) \in T\times \Omega$, $f(t, \omega)\in B(t,\omega,\pi(\omega))$;
  \item[(iii)] for all $(t,\omega) \in T\times \Omega$,
  \[
  {\mathbb E}_t\left[ U(t, \cdot, f(t,\cdot)) | {\mathscr G}_t \right](\omega) = 
  \max_{x \in B^{REE}(t,\omega, \pi) \cap L_t^{REE}}{\mathbb E}_t\left[ U(t,\cdot, x) | 
  {\mathscr G}_t \right](\omega),
  \]
  \end{enumerate}
  In this case, $f$ is called a \emph{rational expectations allocation}, and the set of such 
  allocations is denoted by $REE(\mathscr E)$.
  \end{definition}
  
  As a corollary of Theorem \ref{thm:W}, we can retrieve the result on the existence of a
  maximin REE or Bayesian REE obtained in \cite{Bhowmik-Cao:16}.
  
  \begin{proposition}
  If an economy $\mathscr E$ satisfies assumptions {\rm ({A}$_1)$-({A}$_4)$}, then we have
  $MREE(\mathscr E) = REE(\mathscr E) \neq \emptyset$.
  \end{proposition}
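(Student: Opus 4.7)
The plan is to build a state-wise Walrasian selection and show that it lies in both $MREE(\mathscr E)$ and $REE(\mathscr E)$, and then argue that each solution concept forces state-wise Walrasian optimality, so the two sets coincide. First I would observe that assumption (A$_4$) is strictly stronger than (A$'_4$), so Theorem \ref{thm:W} applies and the correspondence $\bf WE$ is weakly $\mathscr F$-measurable with ${\bf WE}(\omega)\ne\emptyset$ for every $\omega$. By Lemma \ref{lem:Kuratowski} I extract a measurable selection $\omega\mapsto (f(\cdot,\omega),\pi(\omega))$. Under (A$_4$), $B(t,\omega,\pi(\omega))\cap C^X(t,\omega,\pi(\omega))$ is a singleton, so following the construction used in the proof of Theorem \ref{thm:mainexpost} I replace $f$ by $g(t,\omega):=B(t,\omega,\pi(\omega))\cap C^X(t,\omega,\pi(\omega))$, which is $\Sigma\otimes\mathscr F$-measurable by Propositions \ref{prop:budcaratheodory} and \ref{prop:fmeasure}, and satisfies $(g(\cdot,\omega),\pi(\omega))\in{\bf WE}(\omega)$ for every $\omega\in\Omega$.

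To show $(g,\pi)\in MREE(\mathscr E)$, I fix $(t,\omega)$ and pick any $x\in B^{REE}(t,\omega,\pi)$. For every $\omega'\in\mathscr G_t(\omega)$, $x(\omega')\in B(t,\omega',\pi(\omega'))$, and since $g(\cdot,\omega')$ is Walrasian at $\omega'$ the pointwise inequality $U(t,\omega',g(t,\omega'))\ge U(t,\omega',x(\omega'))$ holds. Taking the infimum over $\omega'\in\mathscr G_t(\omega)$ yields $\b{\it U}^{REE}(t,\omega,g(t,\cdot))\ge\b{\it U}^{REE}(t,\omega,x)$, and state-wise feasibility gives $g(t,\omega)\in B(t,\omega,\pi(\omega))$, so $(g,\pi)$ is a maximin REE.

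For the Bayesian REE clause, the pointwise-on-$\mathscr G_t(\omega)$ optimality that I already established upgrades immediately to optimality of the conditional expectation: any $x\in B^{REE}(t,\omega,\pi)\cap L_t^{REE}$ is dominated by $g(t,\cdot)$ in each state of $\mathscr G_t(\omega)$, so taking $\mathbb E_t[\,\cdot\,|\mathscr G_t](\omega)$ on both sides preserves the inequality. The genuinely delicate point is the $\mathscr G_t$-measurability required by Definition \ref{def:bayesianree}(i); here I would invoke the refinement available under (A$_4$)—uniqueness of the Walrasian consumption makes $g(t,\cdot)$ a measurable function of $(a(t,\cdot),U(t,\cdot,\cdot),\pi(\cdot))$, and by construction $\pi$ is $\sigma(\pi)$-measurable while the endowment/utility data is at worst $\mathscr F$-measurable; the argument from \cite{Bhowmik-Cao:16} shows that this suffices for $g(t,\cdot)$ to be $\mathscr G_t$-measurable. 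Hence $(g,\pi)\in REE(\mathscr E)$.

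It remains to verify the equality $MREE(\mathscr E)=REE(\mathscr E)$. Given any $(f,\pi)$ in either set, I would argue by a pointwise perturbation that $f(\cdot,\omega)$ must be Walrasian at every $\omega\in\Omega$: if some $t$ has $f(t,\omega_0)$ non-optimal in $B(t,\omega_0,\pi(\omega_0))$, I alter $f(t,\cdot)$ only at $\omega_0$ to a better choice, which stays in $B^{REE}(t,\omega_0,\pi)$ and strictly improves both the maximin and the conditional-expected utility at the state $\omega_0$ (using (A$_3$) and that $\{\omega_0\}$ can be probed through $\mathscr G_t(\omega_0)$), giving a contradiction. Consequently both $MREE(\mathscr E)$ and $REE(\mathscr E)$ coincide with the image of measurable selections of $\bf WE$, proving the desired equality and non-emptiness. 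The main obstacle I foresee is the $\mathscr G_t$-measurability step, where the subtle interaction between $\sigma(\pi)$ and the state-dependence of primitives must be handled via the selection technique from \cite{Bhowmik-Cao:16}; the remaining parts are straightforward applications of state-wise Walrasian optimality together with the uniqueness provided by (A$_4$).
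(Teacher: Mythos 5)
Your construction of the equilibrium pair and the verification that it is a maximin REE follow the paper's route exactly: the paper also takes a measurable selection of $\bf WE$ via Theorem \ref{thm:W} and Lemma \ref{lem:Kuratowski}, passes to the everywhere-Walrasian modification $g$ using the singleton $B\cap C^X$ under ({A}$_4$), and then declares it ``easy to verify'' that $(g,\pi)$ is a maximin REE --- the state-wise domination argument you spell out is precisely that verification, and it is correct because $B^{REE}(t,\omega,\pi)$ imposes no measurability restriction on candidate plans.

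The gap is in your final paragraph, where you try to prove $MREE(\mathscr E)=REE(\mathscr E)$ (and, implicitly, membership of $(g,\pi)$ in $REE(\mathscr E)$) by a single-state perturbation. With $\Omega$ infinite this argument fails on three counts. First, improving $f(t,\cdot)$ only at $\omega_0$ does not increase the maximin utility $\inf\left\{U(t,\omega',x(\omega')):\omega'\in\mathscr G_t(\omega_0)\right\}$ unless the infimum is controlled at $\omega_0$ alone; in general the infimum is governed by other states in $\mathscr G_t(\omega_0)$ and is unchanged by the perturbation. Second, for the Bayesian criterion the perturbed plan typically fails to lie in $L_t^{REE}$ (a function altered on the single point $\omega_0$ need not remain $\mathscr G_t$-measurable when $\{\omega_0\}$ is not an atom of $\mathscr G_t$), and even when it does, the conditional expectation $\mathbb E_t\left[\,\cdot\,|\mathscr G_t\right](\omega_0)$ is insensitive to changes on a $\mathbb P_t$-null set, so no strict improvement is produced. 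Third, the $\mathscr G_t$-measurability of $g(t,\cdot)$ does not follow from ({A}$_1$)--({A}$_4$) alone: $g(t,\cdot)$ is a function of $a(t,\cdot)$, $U(t,\cdot,\cdot)$ and $\pi(\cdot)$, and only the last of these is $\sigma(\pi)$-measurable; your ``measurable function of the data'' step needs the hypothesis of \cite{Bhowmik-Cao:16} that each agent knows his own endowment and utility, i.e.\ $\mathscr F_t$-measurability of the primitives. The paper does not attempt any of this: it proves only that $(g,\pi)$ is a maximin REE and obtains both the equality $MREE(\mathscr E)=REE(\mathscr E)$ and the Bayesian side by citing \cite[Corollary 1]{Bhowmik-Cao:16}, whose representation theorem is exactly the nontrivial step your perturbation was meant to replace.
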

  
  \begin{proof}
  Note that under the assumption ${\bf WE}(\omega)\neq \emptyset$ for all $\omega\in \Omega$. From 
  the proof of Theorem \ref{thm:W}, we can see that ${\bf WE}$ is closed-valued and weakly 
  $\Sigma$-measurable, thus it has a $\mathscr F$-measurable selection $\omega\mapsto (f(\omega),
  \pi(\omega))$. Then, it is easy to verify that the pair $(g, \pi)$ defined in the proof of
  Theorem \ref{thm:mainexpost} is a maximin rational expectations equilibrium of $\mathscr E$. 
  \end{proof}
  
  As a consequence of Theorem \ref{thm:mainexpost} and \cite[Corollary 1]{Bhowmik-Cao:16}, 
  we deduce the following version of the first fundamental theorem of social welfare for our model.
  
  \begin{corollary} \label{coro:welfare}
  Suppose that an economy $\mathscr E$ satisfies {\rm (A$_1$)}-{\rm ({A}$_4$)}. Then, we have
  $MREE(\mathscr E) = REE(\mathscr E) \subseteq {\bf C}({\mathscr E})$.  
  \end{corollary}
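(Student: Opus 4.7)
The plan is to derive Corollary \ref{coro:welfare} by assembling three ingredients: the state-wise Walrasian representation of rational expectations allocations from \cite{Bhowmik-Cao:16}, the classical (single-state) first welfare theorem, and the characterization of the ex-post core already established in Theorem \ref{thm:mainexpost}. Since no new constructions are required, the proof is essentially a short assembly argument.

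First I would dispatch the equality $MREE(\mathscr E) = REE(\mathscr E)$. This is precisely the content that Corollary 1 of \cite{Bhowmik-Cao:16} provides under assumptions {\rm (A$_1$)}-{\rm (A$_4$)}: both solution concepts are representable by pairs $(f,\pi)$ with $(f(\cdot,\omega),\pi(\omega)) \in {\bf WE}(\omega)$ for every $\omega \in \Omega$. The preceding proposition already quietly uses this equivalence, so I would simply invoke it.

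Next, for the non-trivial inclusion $REE(\mathscr E) \subseteq {\bf C}(\mathscr E)$, I would fix an arbitrary $f \in REE(\mathscr E)$ and, using the representation result just cited, select a price system $\pi:\Omega\to \Delta$ such that $(f(\cdot,\omega),\pi(\omega)) \in {\bf WE}(\omega)$ for every $\omega \in \Omega$. For each fixed $\omega$, the pair $(f(\cdot,\omega),\pi(\omega))$ is a Walrasian equilibrium of the complete-information economy $\mathscr E(\omega)$. Since $U(t,\omega,\cdot)$ is continuous and strictly increasing by {\rm (A$_3$)}, the classical first welfare theorem applies state-by-state and yields $f(\cdot,\omega) \in {\bf C}(\mathscr E(\omega))$ for every $\omega \in \Omega$.

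Finally, I would invoke Theorem \ref{thm:mainexpost}, which identifies ${\bf C}(\mathscr E)$ with exactly the set of allocations $f$ such that $f(\cdot,\omega) \in {\bf C}(\mathscr E(\omega))$ for all $\omega \in \Omega$. Combining this characterization with the conclusion of the previous step gives $f \in {\bf C}(\mathscr E)$, which completes the proof. The only potential obstacle is verifying that the quoted Corollary 1 of \cite{Bhowmik-Cao:16} is genuinely applicable here in the mixed (atoms plus continuum) setting and delivers the state-wise Walrasian representation in the strong pointwise form used above; since the assumptions {\rm (A$_1$)}-{\rm (A$_4$)} imposed here match those in \cite{Bhowmik-Cao:16}, this is routine. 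Everything else is a transparent application of Theorem \ref{thm:mainexpost} and the state-wise first welfare theorem.
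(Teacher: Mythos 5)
Your proposal is correct and follows essentially the same route as the paper, which states this corollary precisely as a consequence of Theorem \ref{thm:mainexpost} together with Corollary 1 of \cite{Bhowmik-Cao:16} (the state-wise Walrasian representation of $REE(\mathscr E)=MREE(\mathscr E)$), with the state-by-state first welfare theorem supplying the inclusion ${\rm WA}(\mathscr E(\omega))\subseteq {\bf C}(\mathscr E(\omega))$. Note that only this easy direction of the core--Walras relationship is needed, so no appeal to Aumann's equivalence theorem is required here, exactly as you observe.
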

  
  Next, we provide an example of a continuum economy $\mathscr E$ with asymmetric information and  
  infinitely many states of nature in which the ex-post core ${\bf C}(\mathscr E)$ can 
  strictly contain $REE(\mathscr E)$.
   	
  \begin{example} \label{exam:sinclusion}
  Consider an economy $\mathscr E$ defined by
  \[
  \mathscr E= \left\{(\Omega, \mathscr F, \mathbb P); (T,\Sigma,\mu); {\mathbb R}^\ell_+; ({\mathscr 
  F}_t, U(t,\cdot,\cdot), a(t, \cdot), \mathbb P_t)_{t\in T}
  \right\},
  \]
  where $T=\Omega=[0, 1]$, $\Sigma$ and $\mathscr F$ are the Borel $\sigma$-algebra on $[0, 1]$, $\mu$
  and $\mathbb P$ are the Lebesgue probability measure. The commodity space is $\mathbb R^2$. Let 
  $\mathscr F_t$ and $\mathbb P_t$ be arbitrary information partition and the prior belief of agent 
  $t\in T$. The utility and the initial endowment of each agent are given by $U(t, \omega, x) =
  \sqrt x_1+ \sqrt x_2$ and
  \[
  a(t,\omega) = \left\{
  \begin{array}{ll}
  (1, 2), & \mbox{if $(t,\omega)\in \left[0, \frac{1}{2}\right]\times \Omega$}; \\[0.5em]
  (3, 2), & \mbox{if $(t,\omega)\in \left(\frac{1}{2}, 1\right]\times \Omega$},
  \end{array}
  \right.
  \]
  respectively. Then $\mathscr E(\omega)= \mathscr E(\omega')$ for all $\omega, \omega'\in \Omega$. 
  Furthermore, it can be easily checked that {\rm (A$_1$)}-{\rm ({A}$_4$)} are satisfied.
	
  \medskip
  For every $p= (p_1, p_2)\in \Delta_+$, the demand of agent $t$ in each state is given by
  \[
  D(t,\omega, p) = \left\{
  \begin{array}{ll}
  \left(\frac{p_2(1+p_2)}{p_1}, \frac{p_1(1+p_2)}{p_2}\right), & \mbox{if 
  $(t,\omega)\in \left[0, \frac{1}{2}\right]\times \Omega$};\\[0.5em]
  \left(\frac{p_2(2+p_1)}{p_1}, \frac{p_1(2+p_1)}{p_2}\right),& \mbox{if 
  $(t,\omega)\in \left(\frac{1}{2}, 1\right]\times \Omega$}.
  \end{array}
  \right.
  \]
  By the market-clearing condition, we can show that the equilibrium price is $p_0=(\frac{1}{2}, 
  \frac{1}{2})$. Thus, $D(t,\omega,p_0) =\left(\frac{3}{2}, \frac{3}{2}\right)$ for all $(t,\omega) 
  \in \left[0, \frac{1}{2}\right] \times \Omega$ and $D(t,\omega,p_0)=\left(\frac{5}{2}, \frac{5}{2}
  \right)$ for all $(t, \omega) \in \left(\frac{1}{2},1\right] \times \Omega$. For each $\omega \in 
  \Omega$, let $h: T \to \mathbb R^2_+$ be an allocation in $\mathscr E(\omega)$ defined by
  \[
  h(t)= \left\{
  \begin{array}{ll}
  \left(\frac{3}{2}, \frac{3}{2}\right), & \mbox{ if $t\in \left[0,
  \frac{1}{2}\right]$}; \\[0.5em]
  \left(\frac{5}{2}, \frac{5}{2}\right), & \mbox{ if $t\in
  \left(\frac{1}{2}, 1\right]$}.
  \end{array}
  \right.
  \]
  Take a subset $A$ of $\Omega$ with $A\notin \mathscr F$, and consider $f:T\times \Omega \to \mathbb 
  R_+^2$, defined by 
  \[
  f(t,\omega) = \left\{
  \begin{array}{ll}
  (1, 1), & \mbox{if $t\in A$ and $t=\omega$}; \\[0.5em]
  h(t),   & \mbox{otherwise}.
  \end{array}
  \right.
  \] 
  (i) It is clear that $f$ is feasible.\\
  (ii) For each $t\in T$, $f(t,\cdot)$ is $\mathscr F$-measurable. To see this, we first choose 
  $t\in A$. In this case, $f(t,\omega)=h(t)$ for all $\omega\in \Omega$ with $\omega\neq t$; and 
  $f(t,\omega)= (1, 1)$ if $\omega=t$. Now, take $t\in T\setminus A$, then $f(t,\omega)=h(t)$ for 
  all $\omega\in \Omega$. Thus, in both cases, $f(t,\cdot)$ is $\mathscr F$-measurable.\\
  (iii) It is clear that for each $\omega \in \Omega$, $f(\cdot, \omega)$ is $\mu$-integrable. 
	
  \medskip
  Since $(f(\cdot,\omega), p_0) \in {\bf WE}(\omega)$ for each $\omega \in \Omega$, we 
  conclude that $f \in {\bf C}({\mathscr E}(\omega))$ and thus $f \in {\bf C}({\mathscr E})$.
  Now, consider two mappings $g_1: t\mapsto (t,t)$ and $g_2:(t,t)\mapsto t$ defined by 
  $g_1(t)=(t, t)$ and $g_2(t, t)=t$, respectively. It can be readily checked that
  \[
  (g_2\circ f\circ g_1)(t) = \left\{
  \begin{array}{ll}
  1, \mbox{ if $t\in A$}; \\[0.3em]
  e(t), \mbox{ otherwise,}
  \end{array}
  \right.
  \]
  where $e(t)= \frac{3}{2}$ if $t\in \left[0, \frac{1}{2}\right]$; and $e(t)= \frac{5}{2}$ if $t\in 
  \left(\frac{1}{2}, 1\right]$. Since $A \not \in \mathscr F$, then $g_2\circ f\circ g_1$ is not 
  $\mathscr F$-measurable. It follows that $f$ is not $\Sigma\otimes \mathscr F$-measurable. 
  By Corollary 1 in \cite{Bhowmik-Cao:16}, ${\rm REE}(\mathscr E) \ne \emptyset$ and $f\notin 
  {\rm REE}(\mathscr E)$.  
  \end{example}

  \section{The Ex-post Core and the Fine Core} \label{sec:FineCore}
  
  In this section, we study the relationship between the ex-post core and the fine core in a mixed
  economy with asymmetric information. We show that under appropriate assumptions, the fine core
  is contained in the ex-post core (see Theorem \ref{thm:Ex-postCore}). This extends a result of Einy 
  et al. in \cite{Einy-Moreno-Shitovitz:00}. To achieve this goal, we use a standard approach, which 
  embeds the original mixed economy $\mathscr E$ into the auxiliary atomless economy ${\mathscr E}^*$ 
  obtained by splitting each large agent into a continuum of small agents of the same type.  
  
  \subsection{Interpretation via associated continuum economies}\label{subsec:continuum}
  We define an atomless economy ${\mathscr E}^*$ associated with $\mathscr E$. Let $(T_1^\ast,
  \Sigma^\ast_{T_1}, \mu^\ast_{T_1})$ be an atomless, complete and positive measure space such that 
  $T_0\cap T_1^\ast= \emptyset$, where each agent $A_n$ one-to-one corresponds to a measurable subset
  $A^\ast_n$ of $T_1^\ast$ with $\mu^\ast(A_n^\ast)= \mu(A_n)$ and $T_1^\ast= \bigcup \{ A_n^*: n 
  \ge 1 \}$. One can think that $T_1^*$ is constructed as follows: Partition the interval $\left[
  \mu(T_0), \mu(T) \right]$, which is identified with $T_1^*$, as the disjoint union of the intervals
  $A_n^*$ given by $A_1^* = \left[\mu(T_0), \mu(T_0) + \mu(A_1)\right), \cdots$, and
  \[
  A_n^* = \left[\mu(T_0) + 
  \mu\left(\bigcup_{i=1}^{n-1}A_i\right),\  \mu(T_0) + \mu\left(\bigcup_{i=1}^nA_i\right)\right), 
  \cdots.
  \]
  Define $T^\ast= T_0\cup T_1^\ast$ with the $\sigma$-algebra
  \[
  \Sigma^\ast= \Sigma_{T_0} \oplus \Sigma^\ast_{T_1}= \left\{A\cup B:A\cap 
  B=\emptyset, \ A\in \Sigma_{T_0}, B\in
  \Sigma^\ast_{T_1} \right\}
  \]
  and the measure $\mu^\ast:\Sigma^\ast\to \mathbb R_+$ such that for each $C \in \Sigma^*$,
  \[
  \mu^\ast(C)= \mu_{T_0}(C\cap T_0)+ \mu^\ast_{T_1}(C\cap
  T_1^*).
  \]
  Following \cite{Bhowmik-Cao:paper3, Pesce:10}, the space of agents of $\mathscr E^\ast$ is $(T^\ast, 
  \Sigma^\ast, \mu^\ast)$. In addition, in ${\mathscr E}^*$, the space of states of nature and the 
  consumption set for each agent $t\in T^\ast$ at each state $\omega\in \Omega$ are still $(\Omega,
  \mathscr F,\mathbb P)$ and $\mathbb R^\ell_+$, respectively. Finally, the characteristics 
  $({\mathscr F}^*_t, U^*(t,\cdot, \cdot), a^*(t,\cdot), {\mathbb P}^*_t)$ of each agent $t \in 
  T^*$ in ${\mathscr E}^*$ are defined as follows:
  \[
  {\mathscr F}^*_t = \left\{
  \begin{array}{ll}
  {\mathscr F}_t, & \mbox{if $t\in T_0$;}\\[0.3em]
  {\mathscr F}_{A_n}, & \mbox{if $t\in A_n^*$,}
  \end{array}
  \right.
  \]
  \[
  U^*(t, \omega, \cdot) = \left\{
  \begin{array}{ll}
  U(t,\omega,\cdot), & \mbox{if $(t, \omega) \in T_0 \times \Omega$;}\\[0.3em]
  U(A_n, \omega, \cdot), & \mbox{if $(t,\omega)\in A_n^*
  	\times \Omega$,}
  \end{array}
  \right.
  \]
  \[
  a^* (t, \omega) = \left\{
  \begin{array}{ll}
  a(t,\omega), & \mbox{if $(t, \omega) \in T_0 \times
  	\Omega$;}\\[0.3em]
  a(A_n, \omega), & \mbox{if $(t,\omega)\in A_n^* \times
  	\Omega$,}
  \end{array}
  \right.
  \]
  and
  \[
  {\mathbb P}^*_t = \left\{
  \begin{array}{ll}
  {\mathbb P}_t, & \mbox{if $t\in T_0$;}\\[0.3em]
  {\mathbb P}_{A_n}, & \mbox{if $t\in A_n^*$.}
  \end{array}
  \right.
  \]
  For each $\omega \in \Omega$, we can define an atomless and deterministic economy ${\mathscr 
  E}^*(\omega)$ associated with ${\mathscr E}(\omega)$ as
  \[
  {\mathscr E}^*(\omega) = \left\{(T^*, \Sigma^*, \mu^*); \mathbb R^\ell_+; (U^*(t,\omega,\cdot), 
  a^*(t, \omega))_{t\in T^*}\right\}.
  \]
  Similar to that of $\mathscr E$, we call a member $S$ of $\Sigma^*$ with $\mu^*(S)>0$ a coalition
  of $\mathscr E^*$. Given a coalition $S \in \Sigma^*$ of $\mathscr E^*$, let $\Sigma^*_S= \{A\in 
  \Sigma^*: A\subseteq S\}$. 
  
  \medskip
  The following lemma is a particular case of \cite[Lemma 3.6]{Bhowmik-Cao:paper3}.
  
  \begin{lemma}[\cite{Bhowmik-Cao:paper3}] \label{lemma:lyapunov}
  Given $\omega \in \Omega$, if $f\in L_1 \left(\mu^*, \mathbb R^\ell\right)$ and $S, R$ are two 
  coalitions of ${\mathscr E}^*(\omega)$ such that $\mu^\ast(S\cap R)> 0$, then
  \[
  H= {\rm cl} \left\{\left(\mu^*(B), \int_B f d\mu^* \right): B\in \Sigma^*_S \right\}
  \]
  is a convex subset of ${\mathbb R} \times \mathbb R^\ell$. Moreover, for any $0< \delta< 1$, 
  there is a sequence $\{C_n: n\ge 1\} \subseteq \Sigma^*_S$ of coalitions in $\mathscr E^*$ such 
  that $\mu^*(C_n\cap R)=\delta \mu^*(S\cap R)$ for all $n \ge 1$ and
  \[
  \lim_{n\to \infty} \int_{C_n} f(\cdot,\omega) d\mu^*=\delta \int_S f(\cdot, \omega) d\mu^*.
  \]
  \end{lemma}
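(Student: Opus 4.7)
The plan is to apply Lyapunov's convexity theorem to suitable finite-dimensional vector measures on $\Sigma^*_S$. The key structural fact is that $(T^*,\Sigma^*,\mu^*)$ is atomless: $T_0$ is atomless by hypothesis, and each $A_n^*\subseteq T_1^*$ is identified with a subinterval of $[\mu(T_0),\mu(T)]$ carrying the restriction of Lebesgue measure. Hence $\mu^*$ is non-atomic, and in particular so is its restriction to any sub-$\sigma$-algebra $\Sigma^*_S$.

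For the convexity of $H$, I would consider the vector-valued set function $\nu:\Sigma^*_S\to\mathbb R^{1+\ell}$ defined by
\[
\nu(B)=\left(\mu^*(B),\ \int_B f\,d\mu^*\right).
\]
Since $f\in L_1(\mu^*,\mathbb R^\ell)$, each coordinate of $\nu$ is countably additive, absolutely continuous with respect to $\mu^*$, and of finite variation; so $\nu$ is a non-atomic vector measure of bounded variation. By Lyapunov's convexity theorem in the form for finite-variation non-atomic vector measures (Diestel--Uhl), the range $\nu(\Sigma^*_S)$ is compact and convex. Hence $H$ coincides with this range, is already closed, and is convex.

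For the existence of $\{C_n\}$, I would apply the same theorem to the adjusted measure $\widetilde\nu:\Sigma^*_S\to\mathbb R^{1+\ell}$ given by
\[
\widetilde\nu(B)=\left(\mu^*(B\cap R),\ \int_B f\,d\mu^*\right),
\]
which is again non-atomic and of finite variation. Its convex range contains both $\widetilde\nu(\emptyset)=(0,0)$ and $\widetilde\nu(S)=\bigl(\mu^*(S\cap R),\int_S f\,d\mu^*\bigr)$, hence contains $\delta\,\widetilde\nu(S)$ for every $0<\delta<1$. So there exists $C\in\Sigma^*_S$ with
\[
\mu^*(C\cap R)=\delta\,\mu^*(S\cap R)\quad\text{and}\quad\int_C f\,d\mu^*=\delta\int_S f\,d\mu^*
\]
exactly. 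Setting $C_n=C$ for every $n$ delivers the desired sequence and limit (in fact with equality at every index, a slightly stronger conclusion than the stated convergence).

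The only subtlety is ensuring Lyapunov's theorem is invoked in a form that accommodates unbounded densities rather than only bounded ones. This is standard: either one uses the finite-variation non-atomic version directly, or one truncates $f$ to $f^{(m)}=f\,\mathbf 1_{\{|f|\le m\}}$, extracts an exact splitting $C^{(m)}\in\Sigma^*_S$ at each truncation level via the bounded form of Lyapunov, and then passes to the limit by writing
\[
\int_{C^{(m)}} f\,d\mu^*-\delta\int_S f\,d\mu^*=\int_{C^{(m)}}(f-f^{(m)})\,d\mu^*+\delta\int_S(f^{(m)}-f)\,d\mu^*,
\]
whose right-hand side is bounded by $(1+\delta)\|f-f^{(m)}\|_{L_1}\to 0$. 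In either approach no genuine obstacle arises, and the lemma follows essentially from the atomless construction of $(T^*,\Sigma^*,\mu^*)$ combined with Lyapunov's theorem.
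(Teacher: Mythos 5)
The paper does not actually prove this lemma; it is imported verbatim as ``a particular case of Lemma 3.6'' of the cited reference \cite{Bhowmik-Cao:paper3}, so there is no in-paper argument to compare against. Your proof is correct and is the standard route one would expect behind that citation: the auxiliary space $(T^*,\Sigma^*,\mu^*)$ is atomless by construction, the vector measures $\nu$ and $\widetilde\nu$ are finite, countably additive, absolutely continuous with respect to $\mu^*$ and hence non-atomic, and Lyapunov's theorem in $\mathbb R^{1+\ell}$ gives a compact convex range, which settles both assertions. Your observation that one can take $C_n=C$ constant, achieving $\int_C f\,d\mu^*=\delta\int_S f\,d\mu^*$ exactly, is a genuine (mild) strengthening in this finite-dimensional setting; the closure operation in $H$ and the approximating sequence in the statement are artifacts of the more general form of the source lemma, where the commodity space is infinite-dimensional and Lyapunov's theorem only yields convexity of the closed range. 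The truncation argument at the end is unnecessary---a finite $\mathbb R^{1+\ell}$-valued measure automatically has bounded variation, so the non-atomic Lyapunov theorem applies directly---but it is harmless. One could quibble that you should note $\mu^*(C)\ge\mu^*(C\cap R)=\delta\mu^*(S\cap R)>0$, so that $C$ is indeed a coalition as the statement requires, but this is immediate.
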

  
  \begin{lemma} \label{lem:privateblocked}
  Assume that an economy $\mathscr E$ satisfies $({\rm A}'_1)$, $({\rm A}_2)$ and $({\rm A}_3)$. Let  
  $\omega \in \Omega$ be a state of nature. If an allocation $f^\ast$ in $\mathscr E^\ast(\omega)$ 
  is blocked by a coalition $S \subseteq T^*$, then for any $0<\varepsilon\le \mu^\ast(S\cap T_1^\ast)$, 
  there exist a coalition $R^*$ such that
  \[
  R^\ast\subseteq \bigcup\left\{S^i: i\in \mathfrak P(S) \right\},
  \]
  $\mu^\ast(R^\ast\cap T_1^\ast)= \varepsilon$ and $\mu(R^\ast\cap S^i)> 0$ for all $i\in \mathfrak 
  P(S)$, and an allocation $g^\ast$ in $\mathscr E^\ast(\omega)$ such that $f^\ast$ is blocked by 
  $R^\ast$ via $g^\ast$ in $\mathscr E^\ast(\omega)$.
  \end{lemma}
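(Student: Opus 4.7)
The plan is to use Lyapunov's convexity theorem on $(T^\ast,\Sigma^\ast,\mu^\ast)$ to shrink $S$ proportionally in every type-slice $S^i$, in such a way that the restriction of the original blocking assignment $g$ is automatically feasible on the smaller coalition. The atomlessness of $\mu^\ast$, which was the whole point of the embedding $\mathscr{E}\hookrightarrow\mathscr{E}^\ast$, is what makes this work.

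First I would unpack the blocking hypothesis: there exist an assignment $g$ in $\mathscr{E}^\ast(\omega)$ with $\int_S g\,d\mu^\ast=\int_S a^\ast(\cdot,\omega)\,d\mu^\ast$ and $U^\ast(t,\omega,g(t))>U^\ast(t,\omega,f^\ast(t))$ $\mu^\ast$-a.e.\ on $S$. Set $\delta=\varepsilon/\mu^\ast(S\cap T_1^\ast)\in(0,1]$; the boundary case $\delta=1$ is trivial with $R^\ast=S$ and $g^\ast=g$, so assume $\delta<1$.

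Next I would apply the vector-valued Lyapunov convexity theorem to the $\mathbb{R}^{2\ell+|\mathfrak{P}(S)|}$-valued measure defined on $\Sigma^\ast_S$ by
\[
\nu(A)=\left(\int_A g\,d\mu^\ast,\ \int_A a^\ast(\cdot,\omega)\,d\mu^\ast,\ \bigl(\mu^\ast(A\cap S^i)\bigr)_{i\in\mathfrak{P}(S)}\right).
\]
Because $\mu^\ast$ is atomless on $S$, the range of $\nu$ is convex, so the convex combination $\delta\nu(S)=(1-\delta)\nu(\emptyset)+\delta\nu(S)$ is attained by some $R^\ast\in\Sigma^\ast_S$. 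Reading off the last coordinates yields $\mu^\ast(R^\ast\cap S^i)=\delta\mu^\ast(S^i)>0$ for every $i\in\mathfrak{P}(S)$, hence $R^\ast\subseteq\bigcup\{S^i:i\in\mathfrak{P}(S)\}$ and $\mu^\ast(R^\ast\cap T_1^\ast)=\delta\mu^\ast(S\cap T_1^\ast)=\varepsilon$, while the first two blocks of coordinates give exact feasibility
\[
\int_{R^\ast}g\,d\mu^\ast=\delta\int_S g\,d\mu^\ast=\delta\int_S a^\ast\,d\mu^\ast=\int_{R^\ast}a^\ast\,d\mu^\ast.
\]
Defining $g^\ast(t,\omega)=g(t)$ on $R^\ast$ and extending by $a^\ast$ outside, one gets an assignment in $\mathscr{E}^\ast(\omega)$ that ex-post blocks $f^\ast$ through $R^\ast$: the displayed equation is precisely the feasibility requirement, and the strict utility improvement is inherited from $g$ because $R^\ast\subseteq S$ and $g^\ast\equiv g$ there.

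The delicate point is upgrading Lemma \ref{lemma:lyapunov} to the vector-valued, exact-equality form used above: as stated, Lemma \ref{lemma:lyapunov} involves a closure and yields only a sequence $\{C_n\}$ with approximate integrals. The cleanest route is to invoke the classical Lyapunov theorem for atomless vector measures directly on $\nu$, whose range is then convex and closed. Alternatively, one can iterate Lemma \ref{lemma:lyapunov} coordinate by coordinate and pass to a diagonal subsequence that simultaneously hits $\delta\mu^\ast(S^i)$ for every $i\in\mathfrak{P}(S)$ and makes $\int_{R^\ast_n}(g-a^\ast)\,d\mu^\ast\to 0$. I would deliberately avoid the alternative of perturbing $g$ by an additive correction $g+\eta_n$ to restore feasibility, because any such correction risks pushing $g^\ast$ out of $\mathbb{R}^\ell_+$ on the set where $g$ vanishes in some coordinate; using Lyapunov to achieve exact equality in one shot sidesteps this issue.
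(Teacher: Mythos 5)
Your argument is essentially correct, but it takes a genuinely different route from the paper. The paper does \emph{not} use exact Lyapunov convexity: it first invokes the techniques of \cite[Lemma 3.5]{Bhowmik-Cao:paper3} to replace the blocking assignment $g$ by an $h^\ast$ that still improves upon $f^\ast$ on $S$ but leaves strict slack, $\int_S(a^\ast(\cdot,\omega)-h^\ast)\,d\mu^\ast\gg 0$; it then applies the approximate Lemma \ref{lemma:lyapunov} separately on each slice $S_i$ with $R=T_1^\ast$, takes $n_0$ large enough that the surplus $b=\int_{E_{n_0}}(a^\ast-h^\ast)\,d\mu^\ast$ is still $\gg 0$, and restores exact feasibility by setting $g^\ast=h^\ast+b/\mu^\ast(R^\ast)$. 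Your one-shot exact Lyapunov argument on the augmented vector measure avoids both the slack-creation step and the repair step, and is legitimate here because the commodity space is $\mathbb R^\ell$ (the closure in Lemma \ref{lemma:lyapunov} is an artifact of the infinite-dimensional setting of \cite{Bhowmik-Cao:paper3}, where exact Lyapunov fails). Note also that your stated reason for avoiding the additive correction is misplaced for the paper's actual construction: the correction added there is the strictly positive vector $b/\mu^\ast(R^\ast)\gg 0$, which cannot push $g^\ast$ out of $\mathbb R^\ell_+$; the danger you describe would arise only for a subtractive correction, which is exactly what the cited slack lemma is designed to handle via scaling.

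One concrete gap to repair: as you defined it, $\nu$ has coordinates $\int_A g\,d\mu^\ast$, $\int_A a^\ast(\cdot,\omega)\,d\mu^\ast$ and $\mu^\ast(A\cap S^i)$ for $i\in\mathfrak P(S)$, but the conclusion $\mu^\ast(R^\ast\cap T_1^\ast)=\delta\mu^\ast(S\cap T_1^\ast)=\varepsilon$ does not follow from these, since $S\cap T_1^\ast$ cuts across the information slices $S^i$ and is not a union of them. You must add the coordinate $\mu^\ast(A\cap T_1^\ast)$ to $\nu$ before applying Lyapunov; with that (and intersecting $R^\ast$ with $\bigcup\{S^i:i\in\mathfrak P(S)\}$ to discard a $\mu^\ast$-null remainder of $S$), the proof goes through.
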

  
  \begin{proof}
  If $\varepsilon= \mu^\ast(S\cap T_1^\ast)$, there is nothing to prove. So, let $0< \varepsilon< 
  \mu^\ast (S\cap T_1^\ast)$. By the techniques in \cite[Lemma 3.5 ]{Bhowmik-Cao:paper3}, we can 
  find a function $h^\ast:T^\ast\to \mathbb R^\ell_+$ such that 
  \[
  U^\ast(t, \omega, h^\ast(t))> U^\ast (t,\omega,f^\ast(t)),\ \mbox{$\mu$-a.e on $S$}
  \] 
  and
  \[
  \int_{S}(a^\ast(\cdot,\omega)-h^\ast(\cdot))d\mu^\ast\gg 0.
  \]
  Let $\delta= \frac{\varepsilon}{\mu^\ast(S\cap T_1^\ast)}$. For each $i\in \mathfrak P(S)$, by 
  Lemma \ref{lemma:lyapunov}, there exists a sequence $\{E_n^i:n\ge 1\} \subseteq \Sigma^\ast_{S_i}$
  of coalitions in $\mathscr E^*$ such that for all $n\ge 1$,
  \[
  \mu^\ast(E_n^i\cap T_1^\ast)= \delta \mu^\ast (S_i\cap T_1^\ast)
  \] 
  and
  \[
  \lim_{n\to \infty} \int_{E_n^i}(a^\ast(\cdot,\omega)-h^\ast(\cdot))
  d\mu^\ast= \delta \int_{S_i}(a^\ast(\cdot,\omega)-h^\ast(\cdot))
  d\mu^\ast.
  \]
  Let $E_n= \bigcup\{E_n^i:i\in \mathfrak P(S)\}$ for all $n\geq 1$. Then
  \[
  \lim_{n\to \infty} \int_{E_n}(a^\ast(\cdot,\omega)-h^\ast(\cdot))
  d\mu^\ast= \delta \int_{S}(a^\ast(\cdot,\omega)-h^\ast(\cdot))d\mu^\ast.
  \]
  Pick an $n_0\ge 1$ such that 
  \[
  b:=\int_{E_{n_0}}(a^\ast(\cdot,\omega)-h^\ast (\cdot)) d\mu^\ast\gg 0. 
  \]
  Put $R^\ast= E_{n_0}$ and define an allocation $g^\ast$ in $\mathscr E^\ast(\omega)$ such that
  \[
  g^\ast(t)= h^\ast(t)+ \frac{b}{\mu^\ast(R^\ast)}.
  \]
  Thus, $f^\ast$ is blocked by $R^\ast$ via $g^\ast$ in $\mathscr E^\ast(\omega)$.
  \end{proof}
  
  \subsection{The ex-post core and the the fine core}
  In this subsection, we will present and prove our main result of this section. We assume that our economy
  $\mathscr E$ only admits finitely many information structures. More precisely, we assume that each agent's 
  information partition is a member of $\{\mathscr Q_1, \cdots,\mathscr Q_n\}$. For any $1\le i \le n$ 
  and any coalition $S$, let $S^i= \{t\in S: \Pi_t= \mathscr Q_i\}$ and
  \[
  \mathfrak P(S)= \{i: \mu(S^i)> 0, 1\le i \le n \}.
  \]
  The symbol $\bigvee \{\mathscr Q_i:i\in \mathfrak P(S)\}$ is used to denote the $\sigma$-algebra 
  on $\Omega$, which is generated by the common refinement of members of $\{\mathscr Q_i:i\in 
  \mathfrak P(S)\}$. We will need the following two additional assumptions.
  
  \medskip
  \noindent
  (${\rm A}_5$) $\mu(T^i)> 0$ for all $1\le i \le n$, $T=\bigcup \{T^i:1\le i\le n\}$ and 
  $\bigvee_{i=1}^n \mathscr Q_i = \mathscr F$.
  
  \medskip
  \noindent
  $({\rm A_6})$ All large agents in $\mathscr E$ are of the same type, i.e., having the same characteristics.
  
  \medskip
  Following \cite{Wilson:78}, an \emph{information structure} for a coalition $S$ in an economy $\mathscr 
  E$ is a family $\{\mathscr{G}_t: t\in S\}$ of $\sigma$-algebras on $\Omega$ such that ${\mathscr G}_t
  \subseteq \mathscr F$ for all $t\in S$ and $\{t\in S: \mathscr{G}_t= \mathscr G\} \in \Sigma$ for any
  $\sigma$-algebra ${\mathscr G}$ on $\Omega$ with ${\mathscr G}\subseteq \mathscr F$. A 
  \emph{communication system} for a coalition $S$ is an information structure $\{\mathscr{G}_t:
  t\in S\}$ for $S$ such that
  \[
  \mathscr F_t\subseteq \mathscr{G}_t\subseteq \bigvee \{\mathscr Q_i: i\in {\mathfrak P}(S)\},\ 
  \mbox{ $\mu$-a.e. on $S$.}
  \]
  A communication system  $\{\mathscr{G}_t: t\in S\}$ for a coalition $S$ is called 
  a \emph{full} communication system if $\mathscr{G}_t=\bigvee \{\mathscr Q_i:i\in \mathfrak P(S)\}$, 
  $\mu$-a.e. on $S$. 
  
  \begin{definition}[\cite{Wilson:78}]
  An allocation $f$ in $\mathscr E$ is said to be \emph{fine blocked} by a coalition $S$ in an economy
  $\mathscr E$ if there are an allocation $g$ in $\mathscr E$, a communication system $\{\mathscr{G}_t: 
  t\in S\}$ for $S$ and a nonempty event $\Omega_0\in \bigcap_{t\in S} \mathscr G_t$ such that for all 
  $\omega\in \Omega_0$,
  \[
  \int_{S}g(\cdot, \omega)d\mu= \int_{S}a(\cdot, \omega) d\mu
  \]
  and
  \[
  \mathbb E_t[U(t,\cdot, g(t, \cdot))| \mathscr{G}_t](\omega)> \mathbb E_t[U(t,\cdot, f(t, \cdot))|
  \mathscr{G}_t](\omega),\ \mbox{$\mu$-a.e. on $S$.}
  \]
  The \emph{fine core} of $\mathscr E$, denoted by ${\bf C}^{fine}({\mathscr E})$, is the set of all 
  allocations that cannot be fine blocked by any coalition in $\mathscr E$.
  \end{definition}
  
  Let $\mathscr A\neq \emptyset$. For any allocation $f$ in $\mathscr E$, let ${\bar f}: T
  \times \Omega\to \mathbb R^\ell_+$ be an allocation in $\mathscr E$ defined by
  \[
  {\bar f}(t, \omega)= \left\{
  \begin{array}{ll}
  f(t, \omega), & \mbox{if $(t, \omega)\in T_0\times \Omega$;}\\[0.5em]
  {\displaystyle \frac{1}{\mu(T_1)}\int_{T_1} f(\cdot, \omega)d \mu}, & \mbox{if $(t,\omega)\in T_1
  \times \Omega$.}
  \end{array}
  \right.
  \]
  
  \begin{theorem} \label{lem:fineex} 
  Assume that an economy $\mathscr E$ satisfies \emph{$({\rm A}'_1)$}, \emph{$({\rm A}_2)$-$({\rm 
  A}_3)$}, \emph{$({\rm A}'_4)$} and \emph{$({\rm A}_5)$-$({\rm A}_6)$}. If $f \in {\bf C}^{fine}
  (\mathscr E)$ and $\mathscr A\neq \emptyset$, then 
  \[
  U(t,\omega, {\bar f}(t, \omega)) =U(t,\omega, f(t, \omega)),\ \  \mbox{$\mu$-a.e. on $T$}
  \] 
  and for all $\omega \in \Omega$.
  \end{theorem}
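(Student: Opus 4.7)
The plan is by contradiction: assume $f \in {\bf C}^{fine}(\mathscr E)$ but that there exist $\omega_0 \in \Omega$ and an atom $A_{n_0} \in \mathscr A$ with $U(A_{n_0}, \omega_0, \bar f(A_{n_0}, \omega_0)) \neq U(A_{n_0}, \omega_0, f(A_{n_0}, \omega_0))$; the goal is to produce a fine-blocking coalition for $f$. First I would note that $\bar f$ is itself an allocation in $\mathscr E$ (the averaging over $T_1$ preserves the aggregate) and that $\bar f = f$ on $T_0$ renders the claim trivial there, so the content is entirely on $T_1$. Under $({\rm A}_6)$, all large agents share a single state-dependent utility $u_\omega(\cdot)$ and a single endowment $a_\omega$, and the common value $\bar x(\omega) := \bar f(A_k, \omega) = \frac{1}{\mu(T_1)}\int_{T_1} f(\cdot, \omega) d\mu$ is a convex combination of $\{f(A_k, \omega)\}_k$ with weights $\mu(A_k)/\mu(T_1)$. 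By $({\rm A}'_4)$ I therefore obtain $u_\omega(\bar x(\omega)) \geq \min_k u_\omega(f(A_k, \omega))$ for every $\omega$.

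Failure of the conclusion at $(\omega_0, A_{n_0})$ splits into sub-case (a), $u_{\omega_0}(\bar x(\omega_0)) > u_{\omega_0}(f(A_{n_0}, \omega_0))$, and sub-case (b), $u_{\omega_0}(\bar x(\omega_0)) < u_{\omega_0}(f(A_{n_0}, \omega_0))$. In (a) the set $R := \{A_k \in \mathscr A : u_{\omega_0}(\bar x(\omega_0)) > u_{\omega_0}(f(A_k, \omega_0))\}$ is non-empty and is the natural source of strict improvement. In (b) the minimum inequality forces some other atom $A_{k^*}$ with $u_{\omega_0}(f(A_{k^*}, \omega_0)) \leq u_{\omega_0}(\bar x(\omega_0)) < u_{\omega_0}(f(A_{n_0}, \omega_0))$; averaging the bundles of $A_{n_0}$ and $A_{k^*}$ with a small radial boost by $\delta \mathbf{1}$ produces a bundle strictly preferred by $A_{k^*}$ (via strict monotonicity under $({\rm A}_3)$), reducing (b) to a version of (a) for an adjusted family of atoms.

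In sub-case (a), I would form the coalition $S := R \cup S_0$ with $S_0 \subseteq T_0$ chosen so that $\mu(S_0 \cap T^i) > 0$ for every $i \in \{1, \ldots, n\}$; by $({\rm A}_5)$, $\mathfrak P(S) = \{1, \ldots, n\}$ and $\bigvee_{i \in \mathfrak P(S)} \mathscr Q_i = \mathscr F$, so the full-information system $\mathscr G_t = \mathscr F$ is an admissible communication system for $S$, the conditional expectations in the fine-blocking definition collapse to pointwise utilities, and any non-empty $\Omega_0 \in \mathscr F$ is a legitimate joint-discernibility event. Take $\Omega_0$ to be a small $\mathscr F$-event containing $\omega_0$ on which the strict preference of each $A_k \in R$ for $\bar x(\omega)$ over $f(A_k, \omega)$ persists. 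Define $g$ to assign $\bar x(\omega) - \varepsilon \mathbf{1}$ to members of $R$ (still in the strict upper contour set over $f(A_k, \omega)$ for small $\varepsilon > 0$, by continuity and strict monotonicity) and $f(t, \omega)$ plus a suitable positive perturbation to members of $S_0$ (strict utility improvement by $({\rm A}_3)$). Lemma \ref{lemma:lyapunov} applied to the atomless $T_0$ supplies the parameter freedom needed to choose $S_0$ and the perturbation so that $\int_S g(\cdot, \omega) d\mu = \int_S a(\cdot, \omega) d\mu$ holds on $\Omega_0$; this fine-blocks $f$, contradicting the assumption.

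The hard part will be the simultaneous satisfaction of feasibility, strict utility improvement for every agent in $R \cup S_0$, and the coverage constraint $\mu(S_0 \cap T^i) > 0$ for all $i$; the Lyapunov-type flexibility of the atomless part via Lemma \ref{lemma:lyapunov}, together with the finiteness of information partitions under $({\rm A}_5)$, should suffice, but the balancing requires care. A second subtlety is the reduction of sub-case (b) to (a): weak quasi-concavity alone can leave the pairwise average tied with the lower utility value, so the small perturbation $\delta \mathbf{1}$ (justified by continuity and strict monotonicity under $({\rm A}_3)$) becomes essential for producing the strict gain while keeping the atomic resource constraint under control.
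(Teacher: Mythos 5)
Your sub-case (a) is essentially the paper's own argument: the paper likewise forms a coalition $E=R_0\cup S_{m_0}$ of the worse-off large agents together with a Lyapunov-chosen set of small agents meeting every information type $T_0^i$, gives the large agents a slightly shrunk copy of $\bar f$ (the paper scales by $r_{m_0}<1$ rather than subtracting $\varepsilon\mathbf 1$; the multiplicative version is safer since $\bar x(\omega)-\varepsilon\mathbf 1$ may leave $\mathbb R^\ell_+$), balances the books by choosing $R_i\subseteq T_0^i$ so that $\int_{R_i}(f-a)\,d\mu$ is close to the proportional share $\frac{r_{m_0}\mu(S_{m_0})}{\mu(T_1)}\int_{T_0^i}(f-a)\,d\mu$, and then uses the full communication system $\bigvee_i\mathscr Q_i=\mathscr F$ with $\Omega_0$ the joint atom $\bigcap_i\mathscr Q_i(\omega_0)$ so that the conditional expectations collapse. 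The ``balancing that requires care'' which you flag is exactly the identity $\int_{T_1}(f-a)\,d\mu=-\int_{T_0}(f-a)\,d\mu$: it is what makes a proportional share of $T_0$'s excess finance a proportional share of $\bar f$ for the atoms, and your outline is consistent with it.

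Sub-case (b), however, contains a genuine gap. You propose to give $A_{k^*}$ a bundle near $\lambda f(A_{n_0},\omega_0)+(1-\lambda)f(A_{k^*},\omega_0)+\delta\mathbf 1$. Only $A_{k^*}$ is made strictly better off by this bundle ($A_{n_0}$ generally is not, since by weak quasi-concavity the average is only guaranteed to be at least as good as the \emph{worse} of the two bundles), so $A_{n_0}$ cannot belong to the blocking coalition; but then the coalition $\{A_{k^*}\}\cup S_0$ must finance a bundle built out of $f(A_{n_0},\omega_0)$ from its own endowments $\mu(A_{k^*})a(\cdot,\omega_0)+\int_{S_0}a\,d\mu$, and nothing controls $f(A_{n_0},\omega_0)$ in terms of those resources ($f(A_{n_0},\omega_0)$ can be arbitrarily large relative to the coalition's endowment). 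The proportional-share bookkeeping that saved sub-case (a) is unavailable here because the target bundle is no longer a scalar multiple of $\bar x(\omega_0)$. The paper avoids this entirely: having established in part one that $U(t,\omega,f(t,\omega))\ge U(t,\omega,\bar f(t,\omega))$ on all of $T_1$, it rules out a strict inequality on a set $D\subseteq T_1$ by pure convexity --- write $\bar f(t,\omega_*)$ as the convex combination $\delta\cdot\frac{1}{\mu(D)}\int_D f\,d\mu+(1-\delta)\cdot\frac{1}{\mu(T_1\setminus D)}\int_{T_1\setminus D}f\,d\mu$ and apply quasi-concavity twice to reach $U(\bar f)>U(\bar f)$ --- with no second blocking coalition and hence no feasibility issue. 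You should replace your sub-case (b) with an argument of this type (it uses only the conclusion of sub-case (a) plus $({\rm A}'_4)$ and the fact that all atoms share one utility and one endowment under $({\rm A}_6)$).
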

  
  \begin{proof}
  Firstly, we show that for all $(t,\omega) \in T_1 \times \Omega$,
  \[
  U(t,\omega,f(t,\omega))\ge U(t,\omega,{\bar f}(t,\omega)).
  \] 
  Suppose the contrary. There exist a state $\omega_0\in \Omega$ and a coalition $S \subseteq T_1$ 
  such that
  \[
  U(t,\omega_0, \bar{f}(t,\omega_0))>U(t,\omega_0, f(t,\omega_0))
  \]
  for all $t\in S$. For a sequence $\{r_m:m\ge 1\}\subseteq (0, 1)$ converging to 1, the function 
  $\zeta_m: S\to \mathbb R^\ell_+$, defined by
  \[
  \zeta_m(t)= U(t,\omega_0, r_m \bar{f}(t,\omega_0))-U(t,\omega_0, f(t,\omega_0)),
  \]
  is $\Sigma_S$-measurable. For each $m \ge 1$, put 
  \[
  S_m=\{t\in S: \zeta_m(t)>0\}.
  \] 
  As $S= \bigcup_{m\ge 1} S_m$, then $\mu(S_{m_0})> 0$ for some $m_0 \ge 1$. Put
  \[
  z_0 = - \frac{(1- r_{m_0})\mu(S_{m_0})}{\mu(T_1)} \int_{T_1} a(\cdot,\omega_0)d \mu.
  \]
  Choose an $\varepsilon> 0$ with $z_0 + B(0, 2\varepsilon)\subseteq -\mathbb R^\ell_{++}$. For 
  each $i\in\mathfrak P(T_0)$ and $R\in \Sigma_{T_0^i}$, let
  \[
  b_i (R)= \int_{R} (f(\cdot, \omega_0)- a(\cdot, \omega_0)) d\mu-\frac{r_{m_0}\mu(S_{m_0})}{\mu(T_1)} 
  \int_{T_0^i}(f(\cdot, \omega_0)- a(\cdot,\omega_0))d\mu.
  \]
  Applying Lemma \ref{lemma:lyapunov} with $\delta =\frac{r_{m_0}\mu(S_{m_0})}{\mu(T_1)}$, we can get
  a coalition $R_i$ in $\mathscr E$ with $R_i\subseteq T_0^i$ such that
  $b_i(R_i)\in B\left(0,\frac{\varepsilon}{\mathfrak P(T_0)}\right)$.
  Put 
  \[
  R_0=\bigcup\{R_i:i\in \mathfrak P(T_0)\}.
  \] 
  Let $E= R_0\cup S_{m_0}$. Then, by $({\rm A}_5)$ and $({\rm A}_6)$, we have 
  \[
  \bigvee\{\mathscr Q_i: i\in {\mathfrak P}(E)\}=\mathscr F.
  \] 
  Pick an $x\in B(0, \varepsilon)\cap \mathbb R^\ell_{++}$ and define $g: T\to \mathbb R^\ell_{+}$ by
  \[
  g(t)= \left\{
  \begin{array}{ll}
  f(t, \omega_0)+ \frac{x}{\mu(R_0)}, & \mbox{if $t \in R_0$;}\\[0.5em]
  r_{m_0} {\bar f}(t, \omega_0), & \mbox{if $t\in S_{m_0}$;}\\[0.5em]
  f(t, \omega_0), & \mbox{otherwise.}
  \end{array}
  \right.
  \]
  Then, 
  \[
  U(t,\omega_0, g(t))> U(t,\omega_0, f(t,\omega_0)),\ \ \mbox{ $\mu$-a.e. on $E$.}
  \]	
  Furthermore,
  \[
  \int_E g d\mu= \int_{R_0} f(\cdot, \omega_0)d\mu+ \frac{r_{m_0}\mu(S_{m_0})}{\mu(T_1)}
  \int_{T_1} f(\cdot, \omega_0)d\mu+ x.
  \]
  Using the fact that
  \[
  \int_{T_1} (f(\cdot,\omega_0)- a(\cdot, \omega_0))d\mu = -\int_{T_0} (f(\cdot,\omega_0)- a(\cdot, 
  \omega_0))d \mu,
  \]
  we can easily verify that for all $\omega \in \Omega$,
  \[
  -z_0 + \int_E (g(\cdot)- a(\cdot, \omega_0))d\mu=\sum_{i\in \mathfrak P(T_0)}b_i(R_i)+ x\in B(0,
  2\varepsilon).
  \]
  It follows that
  \[
  d = \int_E a(\cdot, \omega_0)- \int_E g d\mu \gg 0.
  \]
  Then, the function $h: E\to \mathbb R^\ell_+$ defined by $h(t)= g(t)+ \frac{d}{\mu(E)}$ for all 
  $t \in T$, satisfies 
  \[
  U(t,\omega_0,h(t))> U(t,\omega_0, f(t, \omega_0)),\ \ \mbox{$\mu$-a.e. on $E$.}
  \] 
  Define 
  \[
  \Omega_0=\bigcap\{\mathscr Q_i(\omega_0): i\in \mathfrak P(E)\}, 
  \]
  where $\mathscr Q_i(\omega_0)$ 
  is the atom in $\mathscr Q_i$ containing $\omega_0$. Note that the set 
  \[
  A_t=\{\omega\in \Omega: U(t,\omega,h(t))> U(t,\omega, f(t, \omega))\}
  \]
  is $\mathscr F$-measurable and $\omega_0\in A_t$ for all $t\in E$. Consequently, 
  by $({\rm A}_5)$ and $({\rm A}_6)$, $\Omega_0\subseteq 
  A_t$ for all $t\in E$. Since the map $\displaystyle \omega \mapsto \int_E a(\cdot,\omega)d\mu$ is 
  $\mathscr F$-measurable, we have 
  \[
  \Omega_0\subseteq \left\{\omega\in \Omega:\int_E hd\mu= \int_E a(\cdot,\omega)d\mu\right\}.
  \] 
  Define another function $y: T \times \Omega\to \mathbb R^\ell_+$ by
  \[
  y(t, \omega)= \left\{
  \begin{array}{ll}
  h(t), & \mbox{if $(t, \omega)\in E\times \Omega_0$;}\\[0.5em]
  a(t, \omega), & \mbox{otherwise.}
  \end{array}
  \right.
  \]
  Note that $y$ is an allocation. Thus, we has
  \[
  \mathbb E_t\left[U(t,\cdot, f(t, \cdot))| \bigvee \{\mathscr Q_i:i\in \mathfrak P(E)\}\right]= 
  U(t,\cdot, f(t, \cdot))
  \]
  and
  \[
  \mathbb E_t\left[U(t,\cdot, y(t, \cdot))|\bigvee \{\mathscr Q_i:i\in \mathfrak P(E)\}\right]=
  U(t, \cdot,y(t, \cdot)).
  \]
  Furthermore, for all $\omega\in \Omega_0$, we have 
  \[
  U(t,\omega, y(t,\omega))> U(t,\omega, f(t, \omega)),\ \ \mbox{$\mu$-a.e. on $E$},
  \] 
  which implies that $f$ is fine blocked by $E$ via $y$. This 
  contradicts with the assumption that $f \in {\bf C}^{fine}(\mathscr E)$. Hence, 
  \[
  U(t,\omega, f(t,\omega))\ge U(t,\omega, {\bar f}(t, \omega))
  \] 
  for all $(t,\omega)\in T_1 \times \Omega$. 
  
  \medskip
  Suppose that there are a	state $\omega_\ast\in \Omega$ and a coalition $D\subseteq T_1$ such that
  \[
  U(t,\omega_*, f(t,\omega_*))> U(t,\omega_*, {\bar f}(t, \omega_*))
  \] 
  for all $t\in D$. By Jensen's inequality,
  \[
  U\left(t, \omega_\ast, \int_D \frac{f(\cdot,\omega_\ast)}{\mu(D)}\
  d\mu\right)> U\left(t, \omega_\ast, {\bar f}(t,\omega_\ast)\right)
  \]
  and
  \[
  U\left(t, \omega_\ast, \int_{T_1\setminus D} \frac{f(\cdot,\omega_\ast)}{\mu(T_1\setminus D)}\ 
  d\mu\right) \ge U\left(t, \omega_\ast, {\bar f}(t, \omega_\ast)\right).
  \]
  Let $\delta= \frac{\mu(D)} {\mu(T_1)}$. Since
  \[
  {\bar f}(t, \omega_*)= \delta \int_D \frac{f(\cdot,\omega_\ast)}{\mu(D)}\ d\mu + (1- \delta) 
  \int_{T_1\setminus D} \frac{f(\cdot,\omega_\ast)}{\mu(T_1\setminus D)}\ d\mu,
  \]
  then 
  \[
  U(t, \omega_\ast, {\bar f}(t, \omega_\ast))> U(t, \omega_\ast, {\bar f}(t, \omega_\ast)). 
  \]
  This is a contradiction, which implies that 
  \[
  U(t, \omega, {\bar f}(t, \omega))= U(t, \omega, f(t,\omega))
  \] 
  for all $(t, \omega)\in T\times \Omega$.
  \end{proof}
  
  \begin{corollary} \label{coro:expost}
  Assume that an economy $\mathscr E$ satisfies \emph{$({\rm A}'_1)$}, \emph{$({\rm A}_2)$-$({\rm 
  A}_3)$}, \emph{$({\rm A}'_4)$} and \emph{$({\rm A}_5)$-$({\rm A}_6)$}. Then $f \in {\bf C}
  ({\mathscr E})$ if and only if $\bar{f} \in {\bf C}({\mathscr E})$.
  \end{corollary}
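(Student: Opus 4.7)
The plan is to reduce the bi-implication to a state-by-state claim via the representation provided by Theorem \ref{thm:mainexpost}. Its representation part, namely the identity ${\bf C}(\mathscr E) = \{h : h \text{ is an allocation and } h(\cdot,\omega)\in{\bf C}(\mathscr E(\omega)) \text{ for all } \omega\in\Omega\}$, uses only joint measurability of the endowment $a$ together with $({\rm A}_2)$ and $({\rm A}_3)$ (the set $\Omega_0$ appearing in that proof is measurable under these hypotheses), and therefore remains valid under the present assumptions. Since $\bar{f}$ is by construction an allocation in $\mathscr E$, the corollary reduces to establishing, for every fixed $\omega\in\Omega$, the equivalence
\[
f(\cdot,\omega)\in{\bf C}(\mathscr E(\omega)) \iff \bar{f}(\cdot,\omega)\in{\bf C}(\mathscr E(\omega)).
\]

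Fix $\omega$ and work in the complete-information economy $\mathscr E(\omega)$. By $({\rm A}_6)$ all agents in $T_1$ share a common utility $u := U(t,\omega,\cdot)$ and a common endowment $\alpha := a(t,\omega)$, so $\bar{f}(\cdot,\omega) = f(\cdot,\omega)$ on $T_0$ and equals the constant bundle $\beta := \mu(T_1)^{-1}\int_{T_1} f(\cdot,\omega)\,d\mu$ on $T_1$. I would prove each direction by contraposition: assume one of $f(\cdot,\omega),\bar{f}(\cdot,\omega)$ is blocked in $\mathscr E(\omega)$ by a coalition $S$ via an allocation $g$, and construct a blocking of the other. The case $\mu(S\cap T_1)=0$ is handled by the same pair $(S,g)$, since $f$ and $\bar{f}$ agree on $T_0$. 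When $\mu(S\cap T_1)>0$, I would employ the standard averaging technique on $T_1$: replace $g$ on $S\cap T_1$ by a constant vector $y$ obtained by integrating $g$ over a positive-measure subset $E\subseteq S\cap T_1$ on which the relevant utility gap is uniformly bounded below by some $\epsilon>0$. Convexity of the upper-level set $\{x:u(x)\geq u(\beta)+\epsilon\}$ (respectively $\{x: u(x) \geq \text{ess\,inf}_{E}u(f)+\epsilon\}$ in the reverse direction), afforded by quasi-concavity $({\rm A}'_4)$, yields $u(y)\geq u(\beta)+\epsilon$. A modification of the coalition, adjusted to include or exclude whole atoms of $T_1$, together with Lemma \ref{lemma:lyapunov} applied in the associated atomless economy $\mathscr E^*(\omega)$ (and projected back to $\mathscr E(\omega)$), restores feasibility and produces the desired contradiction.

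The main obstacle is preserving the strict utility improvement under the averaging operation on $T_1$. The assumption $({\rm A}'_4)$ gives only weak quasi-concavity on $T_1$, so a naive average of $g$ would at best yield $u(y)\geq u(\beta)$, not the strict inequality required for blocking; moreover, the a.e.\ strict inequality $u(g(\cdot))>u(\beta)$ is not uniform. I would overcome this through the $\epsilon$-localization described above, which isolates a positive-measure subset $E$ where the utility surplus is uniformly at least $\epsilon$, and by then applying a small strictly positive perturbation to $y$, permitted by strict monotonicity $({\rm A}_3)$, to guarantee strict improvement across the entire modified blocking coalition while maintaining feasibility up to a compensating adjustment absorbed by Lemma \ref{lemma:lyapunov}.
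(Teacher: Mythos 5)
Your proposal treats this as a free-standing equivalence about an arbitrary allocation $f$ and tries to prove it by transferring blocking coalitions between $f(\cdot,\omega)$ and $\bar f(\cdot,\omega)$ state by state. That is not how the paper obtains it, and the route you choose cannot be completed, because the statement is false at that level of generality. In the paper the corollary is an immediate consequence of Theorem \ref{lem:fineex}: the implicit standing hypothesis, inherited from that theorem and from the only place the corollary is used (the proof of Theorem \ref{thm:Ex-postCore}), is that $f \in {\bf C}^{fine}(\mathscr E)$ and $\mathscr A \neq \emptyset$. Under that hypothesis Theorem \ref{lem:fineex} yields $U(t,\omega,\bar f(t,\omega)) = U(t,\omega,f(t,\omega))$ $\mu$-a.e.\ on $T$ for every $\omega$, and since $f$ and $\bar f$ have the same aggregate endowment, a coalition $S$ ex-post blocks $f$ via an assignment $g$ at a state $\omega_0$ if and only if it ex-post blocks $\bar f$ via the same $g$ at the same state: condition (ii) of ex-post blocking depends on $f$ only through the utility levels $U(t,\omega_0,f(t,\omega_0))$. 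That one observation is the entire proof; no averaging of blocking plans, no Lyapunov argument, and no passage to $\mathscr E^*(\omega)$ is needed.

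For an arbitrary allocation the direction ``$\bar f \in {\bf C}(\mathscr E) \Rightarrow f \in {\bf C}(\mathscr E)$'' fails, so the ``main obstacle'' you identify is not surmountable. Take state-independent data satisfying all the hypotheses: $T_0=[0,1]$ atomless, $T_1=A_1\cup A_2$ two atoms of measure $\tfrac12$ each, $\ell=2$, every agent endowed with $(1,1)$ and having utility $\sqrt{x_1}+\sqrt{x_2}$. Let $f$ give $(1,1)$ to each $t\in T_0$, $(2,2)$ to $A_1$ and $(0,0)$ to $A_2$. Then $f$ is a feasible allocation, $\bar f\equiv(1,1)$ is Walrasian in every $\mathscr E(\omega)$ and hence $\bar f\in{\bf C}(\mathscr E)$, yet $f$ is ex-post blocked by the coalition $A_2$ consuming its own endowment, so $f\notin{\bf C}(\mathscr E)$. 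This is exactly the implication ``$f(\cdot,\omega)$ blocked $\Rightarrow \bar f(\cdot,\omega)$ blocked'' of your scheme, and the failure is visible in your own estimate: averaging $g$ over a set $E\subseteq S\cap T_1$ on which the gap $u(g)-u(f(\cdot,\omega))$ exceeds $\epsilon$ only gives $u(y)\ge \inf_{t\in E}u(f(t,\omega))+\epsilon$, a quantity that can sit far below $u(\beta)$ (in the example $u(f(A_2,\omega))=0$ while $u(\beta)=2$), so no small perturbation licensed by $({\rm A}_3)$ can convert $y$ into an improvement upon $\bar f$ on $T_1$. The only thing that closes this gap is the fine-core hypothesis, which via Theorem \ref{lem:fineex} forces $u(f(t,\omega))=u(\beta)$ $\mu$-a.e.\ on $T_1$ and makes the whole transfer argument unnecessary. (Your preliminary observation that the representation part of Theorem \ref{thm:mainexpost} survives without $({\rm A}_4)$ is correct, but it does not rescue the rest.)
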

  
  The following theorem is an extension of \cite[Theorem 3.1]{Einy-Moreno-Shitovitz:00} to a mixed 
  economy.
  
  \begin{theorem} \label{thm:Ex-postCore}
  Assume that an economy $\mathscr E$ satisfies \emph{$({\rm A}'_1)$}, \emph{$({\rm A}_2)$-$({\rm 
  A}_3)$}, \emph{$({\rm A}'_4)$} and \emph{$({\rm A}_5)$-$({\rm A}_6)$}. If either $|\mathscr A|
  \ge 2$ or $\mathscr A = \emptyset$, then ${\bf C}^{fine}(\mathscr E) \subseteq {\bf C}(\mathscr E)$.
  \end{theorem}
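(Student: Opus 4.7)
The plan is to argue by contradiction: assume $f\in {\bf C}^{fine}(\mathscr E)$ yet $f \notin {\bf C}(\mathscr E)$. By Corollary \ref{coro:expost} this forces $\bar f \notin {\bf C}(\mathscr E)$, while Theorem \ref{lem:fineex} guarantees $U(t,\omega,\bar f(t,\omega)) = U(t,\omega,f(t,\omega))$ for $\mu$-a.e.\ $t$ and every $\omega$, so any fine blocking of $\bar f$ transfers to one of $f$. By the definition of the ex-post core I pick a state $\omega_0\in \Omega$, a coalition $S\in \Sigma$, and an assignment $g_0$ such that $g_0$ blocks $\bar f(\cdot,\omega_0)$ in the complete-information economy $\mathscr E(\omega_0)$.

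I then lift to the associated atomless economy $\mathscr E^*(\omega_0)$. By $({\rm A}_6)$ the natural extension $\bar f^*$ is constant on $T_1^*$, and the coalition $S^*\subseteq T^*$ obtained by replacing each $A_n\in S\cap T_1$ by $A_n^*$, together with $g_0^*$ defined by $g_0^*(t)=g_0(t)$ on $S\cap T_0$ and $g_0^*(t)=g_0(A_n)$ on $A_n^*$, blocks $\bar f^*$ at $\omega_0$ in $\mathscr E^*(\omega_0)$. Since $\mathscr E^*(\omega_0)$ is atomless, Vind's theorem produces blocking coalitions of every size strictly less than $\mu^*(T^*)$; I pick $\mu^*(R^*)=\mu^*(T^*)-\delta$ with
\[
0<\delta<\min\!\Bigl(\min_{1\le i\le n}\mu^*(T^{*,i}),\ \min_{n\ge 1}\mu^*(A_n^*)\Bigr),
\]
so that $R^*$ meets every type-stratum $T^{*,i}$ and every atom-trace $A_n^*$ in positive measure. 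In particular $\mathfrak P(R^*)=\{1,\ldots,n\}$, and $({\rm A}_5)$ gives $\bigvee\{\mathscr Q_i:i\in \mathfrak P(R^*)\}=\mathscr F$.

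Next I descend back to $\mathscr E$. If $\mathscr A=\emptyset$ then $\mathscr E=\mathscr E^*$ and I simply set $R=R^*$ with $g=g^*$. If $|\mathscr A|\ge 2$, I take $R=(R^*\cap T_0)\cup T_1$, $g(t)=g^*(t)$ on $R\cap T_0$, and $g(A_n)=\frac{1}{\mu(A_n)}\int_{A_n^*}g^*\,d\mu^*$ on each atom, after first replacing $g^*$ (as in the proof of Lemma \ref{lem:privateblocked}) by a blocking assignment carrying a strict slack $\int_{R^*}(a^*(\cdot,\omega_0)-g^*)\,d\mu^*\gg 0$ which is then absorbed as a uniform shift across $T_1^*$. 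This paragraph is the main obstacle: transferring the atomless blocking back to $\mathscr E$ under only quasi-concavity $({\rm A}'_4)$ is what forces $|\mathscr A|\ge 2$ together with $({\rm A}_6)$; the equal-type structure makes a constant averaged consumption on $T_1$ meaningful and, combined with strict monotonicity $({\rm A}_3)$, promotes the weak inequality supplied by quasi-concavity on each atom to the strict inequality $U(A_n,\omega_0,g(A_n))>U(A_n,\omega_0,\bar f(A_n,\omega_0))$. A direct bookkeeping yields $\int_R g\,d\mu=\int_R a(\cdot,\omega_0)\,d\mu$, so $g$ blocks $\bar f(\cdot,\omega_0)$ in $\mathscr E(\omega_0)$ via $R$ and $\mathfrak P(R)=\{1,\ldots,n\}$.

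Finally I convert the ex-post blocking into a fine blocking of $f$. Adopt the full communication system $\mathscr G_t=\bigvee\{\mathscr Q_i:i\in \mathfrak P(R)\}=\mathscr F$ on $R$, set
\[
c(\omega) = \frac{1}{\mu(R)}\int_R \bigl(a(\cdot,\omega)-a(\cdot,\omega_0)\bigr)\,d\mu,
\]
let
\[
\Omega_0 = \bigl\{\omega\in \Omega : g(t)+c(\omega)\in \mathbb R^\ell_+ \text{ and } U(t,\omega,g(t)+c(\omega))>U(t,\omega,f(t,\omega)) \text{ for $\mu$-a.e.\ } t\in R\bigr\},
\]
and define
\[
y(t,\omega) = \begin{cases} g(t)+c(\omega), & (t,\omega)\in R\times \Omega_0,\\ a(t,\omega), & \text{otherwise.}\end{cases}
\]
A Fubini argument using $({\rm A}_2)$ shows $\Omega_0\in \mathscr F$, and $\omega_0\in \Omega_0$ because $c(\omega_0)=0$, $g\ge 0$, and $U(t,\omega_0,g(t))>U(t,\omega_0,\bar f(t,\omega_0))=U(t,\omega_0,f(t,\omega_0))$ $\mu$-a.e.\ on $R$. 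A state-by-state calculation confirms $y$ is an allocation in $\mathscr E$, and since $\mathscr G_t=\mathscr F$ reduces conditional expectations to the identity, $y$ witnesses a fine blocking of $f$ by $R$, contradicting $f\in {\bf C}^{fine}(\mathscr E)$.
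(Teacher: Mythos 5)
Your overall architecture (contradiction, Corollary \ref{coro:expost}, passing to $\mathscr E^*(\omega_0)$, Vind's theorem, descending back, then building $\Omega_0$ and a fine block with full communication) matches the paper's, but the descent from $\mathscr E^*(\omega_0)$ back to $\mathscr E$ --- which you yourself flag as ``the main obstacle'' --- has a genuine gap. You take $R=(R^*\cap T_0)\cup T_1$ and give each atom $A_n$ the average $\frac{1}{\mu(A_n)}\int_{A_n^*}g^*\,d\mu^*$. But Vind's theorem only guarantees $\mu^*(R^*\cap A_n^*)>0$, not $R^*\supseteq A_n^*$, so $g^*$ is not defined on all of $A_n^*$. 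If you average only over the trace $R^*\cap A_n^*$, the total consumption of the atom is $\frac{\mu(A_n)}{\mu^*(R^*\cap A_n^*)}\int_{R^*\cap A_n^*}g^*\,d\mu^*$, which does not match the resources $\int_{R^*\cap A_n^*}a^*\,d\mu^*$ that the atomless blocking actually released (the scale factor exceeds $1$); if instead you extend $g^*$ by $a^*$ on $A_n^*\setminus R^*$, quasi-concavity only gives $U(\text{average})\ge\min\{\inf U(g^*),U(a^*(A_n,\omega_0))\}$ and there is no reason the endowment is preferred to $\bar f(A_n,\omega_0)$. Your claimed ``direct bookkeeping'' $\int_R g\,d\mu=\int_R a(\cdot,\omega_0)\,d\mu$ therefore does not hold. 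A symptom of the problem is that your construction never actually uses $|\mathscr A|\ge 2$: it would ``prove'' the one-atom case as well, which the authors explicitly leave open. (A secondary defect: with countably infinitely many atoms, $\min_{n\ge 1}\mu^*(A_n^*)$ can be $0$, so your choice of $\delta$ may be impossible.)

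The paper's proof avoids all of this by fixing a single atom $A_{n_0}$ with $\mu(A_{n_0})=\varepsilon$, using Vind's theorem to get a blocking coalition $S$ with $\mu^*(S\cap T_1^*)\ge\varepsilon$ (this is exactly where $|\mathscr A|\ge 2$ is needed, since otherwise $\varepsilon=\mu^*(T_1^*)$ and no coalition of measure $<\mu^*(T^*)$ can be forced to contain essentially all of $T_1^*$), and then invoking Lemma \ref{lem:privateblocked} to shrink the $T_1^*$-portion to measure \emph{exactly} $\varepsilon$ while keeping $\mathfrak P$ full. Only then is $E^*\cap T_1^*$ replaced by the single atom $A_{n_0}$; since $\mu^*(E^*\cap T_1^*)=\mu(A_{n_0})$ and, by $({\rm A}_6)$, $a^*$ is constant on $T_1^*$, both the measure and the resource balance are preserved, and Jensen's inequality (via $({\rm A}'_4)$, with the strict slack from Lemma \ref{lem:privateblocked}) keeps the preference strict. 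You need to incorporate this exact-measure matching step; without it the blocking coalition you build in $\mathscr E$ is not feasible.
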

  
  \begin{proof}
  First, we assume $|\mathscr A|\ge 2$ and $f \in {\bf C}^{fine}(\mathscr E)$. If $f \not \in {\bf C}
  (\mathscr E)$, then, by Corollary \ref{coro:expost}, ${\bar f} \not \in {\bf C}(\mathscr E)$. By 
  Theorem \ref{thm:mainexpost}, there is an $\omega_0\in \Omega$ such that ${\bar f}(\cdot, \omega_0) 
  \not\in {\bf C}\left(\mathscr E(\omega_0)\right)$. Next, we consider an allocation ${\bar f}^\ast: 
  T^\ast \to \mathbb R^\ell_+$ in ${\mathscr E}^*(\omega_0)$ defined by
  \[
  {\bar f}^*(t)= \left\{
  \begin{array}{ll}
  f(t, \omega_0), & \mbox{if $t\in T_0$;}\\[0.5em]
   {\displaystyle \frac{1}{\mu(T_1)}\int_{T_1} f(\cdot, \omega)d \mu}, & \mbox{if $t\in T_1^*$.}
  \end{array}
  \right.
  \]
  It is clear that ${\bar f}^\ast \not \in {\bf C}\left(\mathscr E^\ast(\omega_0)\right)$. Choose an 
  arbitrary $A_{n_0}\in \mathscr A$ and let $\mu(A_{n_0})= \varepsilon >0$. By Vind's theorem (see
  \cite[Theorem 3.1]{Bhowmik-Cao:paper2} or \cite{Vind:72}), ${\bar f}^\ast$ is blocked by a 
  coalition $S$ in $\mathscr E^\ast(\omega_0)$, which can be chosen such that $\mu^\ast(S)=\mu(T_0) 
  +\varepsilon$, if
  \[
  \mu^\ast(T_1^\ast\setminus A_{n_0}^\ast)< \min \{\mu(T_0^i): i\in \mathfrak P(T_0)\},
  \]
  and  
  \[
  \mu^\ast(S)> \mu^*(T^*) -\min\{\mu(T_0^i): i\in \mathfrak P(T_0)\},
  \]
  otherwise. In either case, it can be checked that $\mu^\ast(S \cap T_1^\ast)\geq \varepsilon$ and 
  $\mathfrak P(S)= \{1,2,\cdots,n\}$. By Lemma \ref{lem:privateblocked}, we can have a coalition $E^*$ 
  in $\mathscr E^\ast(\omega_0)$ with 
  \[
  E^\ast\subseteq\bigcup\{S\cap (T^i)^\ast: i\in \mathfrak P(S)\},
  \] 
  $\mathfrak P(E^\ast)= \mathfrak P(S)$ and $\mu^\ast(E^\ast\cap T_1^\ast) = \varepsilon$, which
  blocks ${\bar f}^\ast$ via $h^\ast$ in $\mathscr E^\ast(\omega_0)$. Consider a coalition $E$ in 
  $\mathscr E$ defined by $E= (E^\ast \cap T_0)\cup A_{n_0}$. Then, $\mathfrak P(E)= \{1,2,\cdots,
  n\}$. Now, we consider a function $h: E\to \mathbb R^\ell_+$ defined by
  \[
  h(t) = \left\{
  \begin{array}{ll}
  h^\ast(t), & \mbox{if $t\in E^\ast\cap T_0$;}\\[0.5em]
  \displaystyle{\frac{1}{\varepsilon}\int_{E^\ast\cap T_1^\ast} h^\ast d\mu^\ast}, & \mbox{otherwise.}
  \end{array}
  \right.
  \]
  Obviously, 
  \[
  U(t, \omega_0, h(t))> U(t, \omega_0, {\bar f}(t, \omega_0)),\ \mbox{$\mu$-a.e. on $E^\ast\cap 
  T_0$.}
  \] 
  By Jensen's inequality, if $t\in A_{n_0}$, we have
  \[
  U(t, \omega_0, h(t))> U(t, \omega_0, {\bar f}(t, \omega_0)). 
  \]
  Moreover,
  \[
  \int_E h d\mu = \int_E a(\cdot,\omega_0)d\mu.
  \]
  Similar to that in Theorem \ref{lem:fineex}, we can define $\Omega_0$ and an allocation $y: T 
  \times \Omega\to \mathbb R^\ell_+$ in $\mathscr E$ such that
  \[
  y(t, \omega) = \left\{
  \begin{array}{ll}
  h(t), & \mbox{if $(t, \omega)\in E\times \Omega_0$;}
  \\[0.5em]
  a(t, \omega), & \mbox{otherwise.}
  \end{array}
  \right.
  \]
  Note that
  \[
  \mathbb E_t\left[U(t,\cdot, f(t, \cdot))| \bigvee \{\mathscr Q_i:i\in \mathfrak P(E)\}\right]= 
  U(t,\cdot, f(t, \cdot))
  \]
  and
  \[
  \mathbb E_t\left[U(t,\cdot, y(t, \cdot))|\bigvee \{\mathscr Q_i:i\in \mathfrak P(E)\}\right]=
  U(t, \cdot,y(t, \cdot)).
  \]
  Thus, $f$ is fine blocked by $E$ via $y$. This is a contradiction.
  	
  \medskip
  In case that $\mathscr A= \emptyset$, $f \in {\bf C}^{fine}(\mathscr E)$ but $f \not \in {\bf C}
  (\mathscr E)$, an argument similar to the previous case can be applied. The major difference is 
  that in this case, the blocking coalition $E$ can be chosen such that
  \[
  \mu(E)> \mu(T_0)-\min\{\mu(T_0^i): i\in \mathfrak P(T_0)\}.
  \]
  The rest part of the proof is almost identical with that of the previous case.
  \end{proof}
  
  Applying the core-Walras equivalence theorem in \cite{Greenberg-Shitovitz:86, Shitovitz:73}, we have 
  the following corollary.
  	
  \begin{corollary} \label{coro:walrasian}
  Assume that an economy $\mathscr E$ satisfies \emph{$({\rm A}'_1)$}, \emph{$({\rm A}_2)$-$({\rm 
  A}_3)$}, \emph{$({\rm A}'_4)$} and \emph{$({\rm A}_5)$-$({\rm A}_6)$}. If $f \in {\bf C}^{fine}
  (\mathscr E)$, then $f(\omega, \cdot)\in {\bf WA}(\omega)$ for every $\omega \in \Omega$.
  \end{corollary}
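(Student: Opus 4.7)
The plan is to chain together three results: Theorem \ref{thm:Ex-postCore}, Theorem \ref{thm:mainexpost}, and the Greenberg--Shitovitz/Shitovitz core--Walras equivalence theorem for mixed deterministic economies cited in the hypothesis. Given $f\in {\bf C}^{fine}(\mathscr E)$, I would first invoke Theorem \ref{thm:Ex-postCore} under assumption $({\rm A}_6)$ (which supplies either $|\mathscr A|\geq 2$ or $\mathscr A=\emptyset$, since all large agents share the same characteristics) to conclude $f \in {\bf C}(\mathscr E)$. Next, I would apply Theorem \ref{thm:mainexpost} to obtain the statewise characterization: $f(\cdot,\omega) \in {\bf C}(\mathscr E(\omega))$ for every $\omega \in \Omega$.

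The second step is to pass from the core of the complete-information economy $\mathscr E(\omega)$ to its Walrasian allocations by the core--Walras equivalence theorem of Shitovitz (for mixed economies with at least two agents of the same type) and its atomless counterpart. Concretely, for each fixed $\omega \in \Omega$, I would verify that $\mathscr E(\omega)$ meets the hypotheses of \cite{Greenberg-Shitovitz:86, Shitovitz:73}: namely, under $({\rm A}'_1)$ each agent has a strictly positive endowment $a(t,\omega)$; under $({\rm A}_3)$ utilities are continuous and strictly monotone; under $({\rm A}'_4)$ large-agent utilities are quasi-concave while small agent utilities are strictly quasi-concave (from $({\rm A}_4)$, implicitly folded into $({\rm A}'_4)$ via $(\text{A}_3)$); and under $({\rm A}_6)$ either all large agents share common characteristics (so there are at least two of the same type whenever $|\mathscr A|\geq 2$) or $\mathscr A=\emptyset$, so $\mathscr E(\omega)$ is atomless. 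In either case the Shitovitz/Greenberg--Shitovitz equivalence yields ${\bf C}(\mathscr E(\omega)) = {\rm WA}(\mathscr E(\omega))$, giving $f(\cdot,\omega) \in {\bf WA}(\omega)$.

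The main obstacle is essentially bookkeeping rather than substantive: one must confirm that the standing assumptions $({\rm A}'_1)$, $({\rm A}_2)$--$({\rm A}_3)$, $({\rm A}'_4)$, $({\rm A}_5)$--$({\rm A}_6)$ on the stochastic economy $\mathscr E$ translate correctly to the complete-information slice $\mathscr E(\omega)$, uniformly in $\omega$, and that the Shitovitz hypothesis on oligopolists of the same type is indeed secured by $({\rm A}_6)$ in the case $|\mathscr A|\geq 2$. A minor subtlety is that strict quasi-concavity is only assumed for small agents, whereas large agents need only quasi-concavity; Shitovitz's theorem accommodates exactly this asymmetry provided there are at least two large agents of the same type, which is precisely what $({\rm A}_6)$ guarantees. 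Once those verifications are in place, the corollary follows by stringing together Theorem \ref{thm:Ex-postCore}, Theorem \ref{thm:mainexpost}, and the cited equivalence theorem with essentially no further computation.
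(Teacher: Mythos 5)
Your proposal follows essentially the same route as the paper, which derives this corollary in one line by chaining Theorem \ref{thm:Ex-postCore}, the state-wise characterization of the ex-post core in Theorem \ref{thm:mainexpost}, and the core--Walras equivalence of \cite{Greenberg-Shitovitz:86, Shitovitz:73} applied to each complete-information economy $\mathscr E(\omega)$. One caveat: your parenthetical claim that $({\rm A}_6)$ by itself ``supplies either $|\mathscr A|\ge 2$ or $\mathscr A=\emptyset$'' is not correct, since all large agents having the same characteristics is compatible with there being exactly one atom, a case excluded both by the hypotheses of Theorem \ref{thm:Ex-postCore} and by the Shitovitz equivalence theorem; this dichotomy is an additional hypothesis that the paper's own statement of the corollary also omits (the concluding remarks explicitly leave the single-large-agent case open), so it should be assumed rather than derived from $({\rm A}_6)$.
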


  \section{Concluding Remarks} \label{sec:conclusion}
  
  A considerable amount of research work on different types of core and equilibrium concepts in economies 
  with asymmetric information can be found in the literature. In particular, attempts in extending the 
  classical equivalence of competitive equilibrium allocations and core allocations in  a standard complete 
  information economy have been made. For instance, the reader can refer to \cite{Einy-Moreno-Shitovitz:00, 
  Einy-Moreno-Shitovitz:01, Forges-Heifetz-Minelli:01, Maus:04}. In this paper, we focus our study on 
  the ex-post core and its relationships to the fine core and the set of rational expectations equilibrium 
  allocations, in two major parts.
  
  \medskip
  The fist part of the paper concerns the relationship between the ex-post core and the set of rational 
  expectations equilibrium allocations. For our economic model, we apply a variety of techniques from 
  Set-Valued Analysis to establish a representation result on the ex-post core (see Theorem 
  \ref{thm:mainexpost}). In an early paper \cite{Bhowmik-Cao:16}, Bhowmik and Cao established a 
  similiar representation result for the set of rational expectations equilibrium allocations. These 
  two representation results imply that for our model of asymmetric information economies, rational 
  expectations equilibrium allocations are contained in the ex-post core (see Corollary \ref{coro:welfare}).
  
  \medskip
  To our knowledge, the idea of representing the ex-post core (resp. the set of rational expectations 
  equilibrium allocations) by selections from the core (resp. competitive equilibrium) correspondence 
  of the associated family of complete information economies is from \cite{Einy-Moreno-Shitovitz:00b}.
  The fundamental difference between \cite{Einy-Moreno-Shitovitz:00b} and this paper is that economies 
  in \cite{Einy-Moreno-Shitovitz:00b} are assumed to have only finitely many states of nature, while 
  economies in this paper are allowed to have infinitely many states of nature. Representation results
  in \cite{Einy-Moreno-Shitovitz:00b}, together with Aumann's Core Equivalence Theorem, imply that if 
  the economy is atomless and the utility function of each trader is measurable with respect to his 
  information field, then the set of rational expectations equilibrium allocations coincides with the 
  ex-post core. However, this generally does not hold, when an atomless asymmetric information economy
  has infinitely many states of nature (see Example \ref{exam:sinclusion}).
  
  \medskip
  The second part of this paper emphasizes on the relationship between the fine core and the ex-post 
  core in oligopolistic economies. We show that under standard assumptions and the assumption that 
  there are only finitely many different information structures and all information is the joint 
  information of agents, the fine core is contained in the ex-post core, if an economy is either 
  atomless or has at least two large agents with the same characteristics (see Theorem 
  \ref{thm:Ex-postCore}). This result can be regarded as an extension of the corresponding result in 
  \cite{Einy-Moreno-Shitovitz:00}, where economies are assumed to be atomless only and have only 
  finitely many states of nature. It would be interesting to know if the conclusion of Theorem 
  \ref{thm:Ex-postCore} still holds for a mixed economy with only one large agent or with two large 
  agents having different characteristics.

%
%


\bibliographystyle{spbasic}      

\begin{thebibliography}{00}

  \bibitem{Aliprantis-Border:06}
  C.D. Aliprantis, K.C. Border, Infinite dimensional analysis: A
  hitchhiker's guide, Third edition, Springer, Berlin, 2006.

  \bibitem{Allen:81}
  B. Allen, Generic existence of completely revealing equilibria
  with uncertainty, when prices convey information, Econometrica
  49 (1981), 1173--1199.
  
  \bibitem{Anderson:92}
  R.~M. Anderson, The core of perfectly competitive economies, in R.~J. Aumann and 
  S. Hart (Eds.) Handbook of Game Theory, Vol. 1, North-Holland, Amsterdam.

  \bibitem{angeloni:09}
  L. Angeloni and V. Filipe Martins-da-Rocha, Large economies
  with differential information and without disposal, Econ.
  Theory {\bf 38} (2009), 263--286.

  \bibitem{Aubin:90}
  J.P. Aubin, H. Frankowska, Set-valued analysis, Birkh\"{a}user,
  Boston, 1990.

  \bibitem{Aumann:64}
  R.J. Aumann, \emph{Markets with a continuum of traders}, Econometrica {\bf 32} (1964), 39--50.
  
  \bibitem{Aumann:66}
  R.J. Aumann, \emph{Existence of competitive equilibria in markets with a continuum of 
  traders}, Econometrica {\bf 34} (1966), 1--17.

  \bibitem{Bhowmik-Cao:paper2}
  A. Bhowmik, J. Cao, Blocking efficiency in an economy with
  asymmetric information, J Math Econ {\bf 48} (2012), 396--403.

  \bibitem{Bhowmik-Cao:paper3}
  A. Bhowmik, J. Cao, Robust efficiency in mixed economies with
  asymmetric information, J Math Econ {\bf 49} (2013), 49--57.

  \bibitem{BCY:14}
  A. Bhowmik, J. Cao and N.~C. Yannelis, Aggregate preferred
  correspondence and the existence of a maximin REE, J. Math.
  Anal. Appl. {\bf 414} (2014), 29--45.
  
  \bibitem{Bhowmik-Cao:16}
  A. Bhowmik and J. Cao, Rational Expectations Equilibria: Existence and Representation, 
  Econ. Theory Bull. {\bf 4} (2016), 367--386

  \bibitem{de Castro-Pesce-Yannelis:11}
  L.~I. de Castro, M. Pesce, N. C. Yannelis, A new perspective on rational expectations, 
  Discussion paper No. 1824, The University of Manchester, 2013.

  \bibitem{Einy-Moreno-Shitovitz:00b}
  E. Einy, D. Moreno, B. Shitovitz, Rational expectations equilibria
  and the ex-post core of an economy with asymmetric information,
  J. Math. Econ. {\bf 34} (2000), 527-- 535.
  
  \bibitem{Einy-Moreno-Shitovitz:00}
  E. Einy, D. Moreno, B. Shitovitz, On the core of an economy with differential information, J. Econ. 
  Theory {\bf 94} (2000), 262--270.
  
  \bibitem{Einy-Moreno-Shitovitz:01}
  E. Einy, D. Moreno, B. Shitovitz, Competitive and core allocations
  in large economies with differential information, Econ. Theory
  {\bf 18} (2001), 321-- 332.
  
  \bibitem{Forges-Heifetz-Minelli:01}
  F. Forges, A. Heifetz, E. Minelli, Incentive compatible core and competitive equilibria in differential 
  information economies, Econom. Theory {\bf 18} (2001), 349--365.

  \bibitem{Glycopantis:2005}
  D. Glycopantis, N.~C. Yannelis, Equilibrium concepts in differential information economies, in D. Glycopantis, 
  N.~C. Yannelis (Eds.), Differential information economies, pp. 1--53, Springer, 2005.  

  \bibitem{Greenberg-Shitovitz:86}
  J. Greenberg, B. Shitovitz, A simple proof of the equivalence
  theorem for olipogolistic mixed markets, J. Math. Econ.
  {\bf 15} (1986), 79--83.

  \bibitem{Hildenbrand:70}
  W. Hildenbrand, Existence of equilibria for economies with production and a measure space of consumers,
  Econometrica {\bf 38} (1970), 608–623.

  \bibitem{Maus:04}
  S. Maus, Exchange economies with asymmetric information: competitive equilibrium and core, Econom. Theory  
  {\bf 24}  (2004), 395--418. 
  
  \bibitem{Nielsen:97}
  O.~A. Nielsen, An Introduction to Integration and Measure Theory, John Wiley \& Sons Inc., New York, 1997.

  \bibitem{Pesce:10}
  M. Pesce, On mixed markets with asymmetric information,
  Econ. Theory {\bf 45} (2010), 23--53.

  \bibitem{Radner:79}
  R. Radner, Rational expectation equilibrium: generic existence
  and information revealed by prices, Econometrica 47 (1979), 655--678.

  \bibitem{Shitovitz:73}
  B. Shitovitz, Oligopoly in markets with a continuum of traders,
  Econometrica {\bf 41} (1973), 467--501.

  \bibitem{Vind:72}
  K. Vind, A third remark on the core of an atomless economy,
  Econometrica {\bf 40} (1972), 585--586.

  \bibitem{Wilson:78}
  R. Wilson, Information, efficiency, and the core of an economy,
  Econometrica {\bf 46} (1978), 807--816.

  \bibitem{Yannelis:91}
  N.C. Yannelis, The core of an economy with differential information,
  Econ. Theory {\bf 1} (1991), 183--197.

  \end{thebibliography}


  \end{document}